\newtheorem{thm}{Theorem}
\newtheorem{prop}{Proposition}
\newtheorem{lemma}{Lemma}
\newtheorem{cor}{Corollary}
\theoremstyle{definition}
\newtheorem{rmk}{Remark}
\theoremstyle{remark}
\DeclareMathOperator{\tr}{tr}
\DeclareMathOperator{\argmax}{arg\,max}
\newcommand{\RE}{\textup{Re\,}}
\newcommand{\Hilbert}{\mathcal{H}}
\newcommand{\PPP}{\mathbb{P}}
\newcommand{\ZZZ}{\mathbb{Z}}
\newcommand{\SSS}{\mathbb{S}}
\newcommand{\be}{\begin{equation}}
\newcommand{\ee}{\end{equation}}
\newcommand{\scp}[2]{\langle #1,#2 \rangle}
\newcommand{\eps}{\varepsilon}
\newcommand{\mc}{\mathrm{mc}}
\newcommand{\eq}{\mathrm{eq}}
\newcommand{\noneq}{\mathrm{neq}}
\newcommand{\HfF}{H_0^\mathrm{fF}}
\newcommand{\specialcell}[1]{\ifmeasuring@#1\else\omit$\displaystyle#1$\ignorespaces\fi}
\title{Macroscopic Thermalization for Highly Degenerate Hamiltonians After Slight Perturbation}
\date{July 11, 2025}
\author[1]{Barbara Roos\thanks{ORCID: 0000-0002-9071-5880, E-mail: barbara.roos@uni-tuebingen.de}}
\author[2,3]{Shoki Sugimoto\thanks{ORCID: 0000-0002-5874-928X, E-mail: shoki.sugimoto@ap.t.u-tokyo.ac.jp}}
\author[1]{Stefan Teufel\thanks{ORCID: 0000-0003-3296-4261, E-mail: stefan.teufel@uni-tuebingen.de}}
\author[1]{Roderich Tumulka\thanks{ORCID: 0000-0001-5075-9929, E-mail: roderich.tumulka@uni-tuebingen.de}}
\author[1,4]{Cornelia Vogel\thanks{ORCID: 0000-0002-3905-4730, E-mail: cornelia.vogel@uni-tuebingen.de}}
\affil[1]{Mathematics Institute, Eberhard Karls University T\"ubingen, Auf der Morgenstelle 10, 72076 T\"ubingen, Germany}
\affil[2]{Department of Applied Physics, The University of Tokyo, 7-3-1 Hongo, Bunkyo-ku, Tokyo 113-0033, Japan}
\affil[3]{Nonequilibrium Quantum Statistical Mechanics RIKEN Hakubi Research Team, RIKEN Pioneering Research Institute (PRI), Wako, Saitama 351-0198, Japan}
\affil[4]{Dipartimento di Matematica, Universit\`a degli Studi di Milano, Via Cesare Saldini 50, 20133 Milano, Italy}
\begin{document}

\maketitle

\begin{abstract}
We say of an isolated macroscopic quantum system in a pure state $\psi$ that it is in macroscopic thermal equilibrium (MATE) if $\psi$ lies in or close to a suitable subspace $\Hilbert_\eq$ of Hilbert space. It is known that
every initial state $\psi_0$ will eventually reach and stay there most of the time (``thermalize'') if the Hamiltonian is non-degenerate and satisfies the appropriate version of the eigenstate thermalization hypothesis (ETH), i.e., that every eigenvector is in MATE. Tasaki recently proved the ETH for a certain perturbation $H_\theta^\mathrm{fF}$ of the Hamiltonian $\HfF$ of $N\gg 1$ free fermions on a one-dimensional lattice. 
The perturbation is needed to remove the high degeneracies of $\HfF$. 
Here, we first point out that also for degenerate Hamiltonians all $\psi_0$ thermalize if the ETH holds, i.e., if \emph{every} eigenbasis lies in MATE, and we prove that this is the case for $\HfF$. Inspired by the fact that there is \emph{one} eigenbasis of $\HfF$ for which MATE can be proved more easily than for the others,  with smaller error bounds, and also in higher spatial dimensions, we show for any given $H_0$ that the existence of  one eigenbasis in MATE implies quite generally that \emph{most} eigenbases of $H_0$ lie in MATE. 
We also show that, as a consequence,  after adding a small generic perturbation, $H=H_0+\lambda V$ with $\lambda\ll 1$, for most perturbations~$V$ the perturbed Hamiltonian $H$ satisfies ETH and all states thermalize.

\medskip

{\bf Key words:} eigenstate thermalization hypothesis (ETH); generic perturbation; thermal equilibrium subspace.
\end{abstract}

\section{Introduction}
\label{sec:intro}

We consider an isolated macroscopic quantum system $S$ in a pure state $\psi\in\Hilbert$ evolving unitarily, $\psi_t=e^{-iHt}\psi_0$, for simplicity with $\psi_0$ (and thus $\psi_t$) in a micro-canonical ``energy shell'' $\Hilbert_\mc$, the spectral subspace of $H$ corresponding to energies in a small interval $[E-\Delta E,E]$. There are different concepts of thermal equilibrium of quantum systems \cite{RDO08,Reimann08,LPSW09,GLMTZ10, tas16, GogEis16,Gem15,Ueda,lanford1972approach,haag1973asymptotic,sukhov1983convergence,jakvsic2024approach,sugimoto2023bounds} (for a comparison of some, see \cite{GHLT15,GHLT16}), and one important concept for an isolated system $S$ is that its state $\psi$ lies, at least approximately, in a certain subspace $\Hilbert_\eq\subseteq \Hilbert_\mc$ containing the pure states that ``look macroscopically like thermal equilibrium states.''  Following \cite{GHLT15,GHLT16}, 
we call this concept ``macroscopic thermal equilibrium'' (MATE) \cite{GLMTZ10, tas16, Ueda, GLTZ19, SWGW23} and speak of ``macroscopic thermalization'' if $\psi_t$ reaches MATE sooner or later (even though $\psi_0$ may be far from MATE) and stays there for most of the time.
For brevity, we will drop the adjective ``macroscopic'' and just speak of ``thermal equilibrium'' and ``thermalization'' in the following. 

To take a concrete example, consider a gas of many but finitely many particles in a box. In the classical case it is well known, going back to Boltzmann, that the majority of microstates on some energy shell look macroscopically like a gas in thermal equilibrium, i.e., have uniform empirical position distribution and Maxwellian empirical momentum distribution. Moreover, it is expected (but very hard to prove) that the system starting from the majority of those fewer microstates corresponding to some non-equilibrium macrostate (say all particles start in the same half of the box) will also thermalize after some time, i.e., look macroscopically like a gas in thermal equilibrium for most later times. Note that this concept of thermalization does not require a heat bath, nor infinite system size, nor randomness in the evolution. The framework sketched above captures the analogous question for quantum systems, and our explicit example is indeed the (perturbed) free Fermi gas in a box.

A natural question is under which conditions on $\Hilbert_\eq$, $H$, and $\psi_0$ the system will thermalize. An observation made by Goldstein et al.~\cite{GLMTZ10} and Tasaki \cite{tas16} is that if~$H$ has non-degenerate spectrum and satisfies the appropriate version of the eigenstate thermalization hypothesis (ETH) \cite{Deutsch91,Srednicki94}, i.e., if 
\be\label{ETH1}
\text{every eigenvector of $H$ is in MATE,}
\ee
then \emph{every} $\psi_0$ thermalizes. Goldstein et al.~\cite{GLMTZ10} further proved that if $\dim \Hilbert_\eq/\dim \Hilbert_\mc$ is close to 1 (which is usually satisfied in practice), and if we take $H$ to be a random matrix with unitarily invariant distribution (or, equivalently, with an eigenbasis that is uniformly (Haar) distributed over all orthonormal bases and independent of the eigenvalues) and non-degenerate eigenvalues, then it satisfies the ETH \eqref{ETH1} with probability close to 1. (A similar result was obtained by Reimann \cite{Reimann2015}.) An observation that we add in Proposition~\ref{prop:ETH} below is that \eqref{ETH1} alone guarantees that every $\psi_0$ thermalizes, even for Hamiltonians with highly degenerate spectra.  

The present paper is inspired particularly by recent works of Tasaki \cite{T24,T24b} in which he focused on specifying concrete $\Hilbert_\eq$ and $H$ and proving for them that every $\psi_0$ thermalizes. The goal of proving thermalization for specific Hamiltonians brings into focus difficulties arising from highly degenerate eigenvalues; this paper is mainly about ways to deal with these difficulties. 

Concretely, Tasaki \cite{T24,T24b} (and earlier Shiraishi and Tasaki \cite{ST23}) considered $N\gg 1$ free non-relativistic fermions (``the free Fermi gas'') on  the 1d lattice $\Lambda\coloneqq \ZZZ/L\ZZZ$ with $L> N$ sites, which defines a Hilbert space $\Hilbert$ and a Hamiltonian $\HfF$, and took $\Hilbert_\eq$, as a simple model, to comprise the states for which the number of particles in a subinterval $\Gamma\subset\Lambda$ of the lattice lies within a suitable tolerance of $N|\Gamma|/L $.\footnote{A more realistic model of $\Hilbert_\eq$ would involve (a)~not only one region $\Gamma$ but every (suitably coarse-grained) macroscopic region in space, (b)~the (coarse-grained) distribution of momenta, and (c)~other macroscopic observables such as total spin. 
Item (a) can be implemented rather easily. Indeed, in \cite{T24,T24b}, Tasaki  partitioned the lattice into a (not too large) number of subintervals $\Gamma_i$ (thought of as macroscopic regions) and required, as the definition of $\Hilbert_\eq$, that the number of particles in each $\Gamma_i$ lies within suitable tolerances of $N|\Gamma_i|/L$, so the coarse grained empirical distribution of particles is approximately uniform in 1d physical space. Since this setup can be dealt with mathematically in much the same way as just considering a single $\Gamma$ (see Remark~\ref{rem:density}), we will stick here with the simpler model. Item (b), on the other hand, is pointless for the free Fermi gas, since individual momenta are conserved, and the treatment of the interacting Fermi gas is far beyond the scope of this paper.} 
Since for this simple model, the choice of $\Hilbert_\eq$ involves no conditions on the energy, the restriction to $[E-\Delta E,E]$ plays no role, so we can simply take $\Hilbert_\mc=\Hilbert$.
It is easy to see that $\dim\Hilbert_\eq/\dim\Hilbert$ is indeed close to 1. However, $\HfF$ has highly degenerate eigenvalues, and for this reason Shiraishi and Tasaki considered a perturbation $H_\theta^\mathrm{fF}$ of $\HfF$ by a small magnetic flux $\theta$ through the ring, which removes all degeneracies if $L\geq 3$ is  prime. Correspondingly, they proved thermalization of any initial state $\psi_0$ under the evolution generated by  $H_\theta^\mathrm{fF}$. The question whether a similar statement holds also for $\HfF$ was left open.

One of our main results (Theorem~\ref{thm2} in Section~\ref{sec:resultsfermions}) proves the ETH \eqref{ETH1} relative to this $\Hilbert_\eq$ also for the unperturbed free fermion Hamiltonian $\HfF$ in one spatial dimension. As a corollary we conclude, using the observation explained above, that also under the evolution of $\HfF$ every initial state $\psi_0$ thermalizes.

We also present results in another direction: Consider for an orthonormal basis (ONB) $B$ the condition that
\be\label{ETHONB1}
\text{every $\phi\in B$ lies in MATE.}
\ee
It turns out that there is \emph{one} eigen-ONB $B_1$ of $\HfF$ that is particularly good in several ways (see Proposition~\ref{prop: ETH higherdim} in Section~\ref{sec:results}):
\begin{itemize}
    \item[(i)] $B_1$ satisfies \eqref{ETHONB1} with much smaller error bounds (i.e., smaller deviations from $\Hilbert_\eq$) than some other eigen-ONBs; in fact like $e^{-N}$ instead of a negative power of $N$;
    \item[(ii)] it is easier to prove \eqref{ETHONB1} for $B_1$ than for other eigen-ONBs;
    \item[(iii)] in higher dimensions (i.e., considering a lattice $\ZZZ^d/L\ZZZ^d$ with $d>1$), we could find a proof of \eqref{ETHONB1} for (the analog of) $B_1$, but not for \emph{all} eigen-ONBs of (the analog of) $\HfF$.
\end{itemize}
This situation motivates us to propose a strategy for dealing with degenerate Hamiltonians for which some, but not every, eigen-ONB $B$ satisfies \eqref{ETHONB1}. Let us call such a general Hamiltonian $H_0$. (Note that violations of \eqref{ETH1} imply the existence of some $\psi_0$ that will not thermalize: for example, eigenstates that are not initially in MATE will never reach MATE because they are stationary.) 
If $H_0$ is only moderately degenerate, then an elementary estimate (see Corollary~\ref{cor1} in Section~\ref{sec:background}) shows that\eqref{ETHONB1} for one eigenbasis would enforce \eqref{ETHONB1} for every other eigenbasis with moderately worse error bounds.
However, in the example of the free Fermi gas in $d\geq 1$ space dimensions, the degeneracy is around $2^{Nd}$ and thus too high. 
For this reason, we propose to consider a perturbation of $H_0$,\footnote{Still, the considerations stay fully rigorous. In particular, they do not involve neglecting higher-order terms in a series expansion (as the word ``perturbation'' might suggest in some contexts).} but (unlike \cite{ST23,T24,T24b}) a \emph{random} perturbation
\be\label{HH0V}
H=H_0+\lambda V\,,
\ee
thought of as a \emph{generic} perturbation. We argue in Section~\ref{sec:generic} that it is physically appropriate to consider a generic perturbation. In fact, as soon as $V$ has continuous probability distribution, $H$ has non-degenerate eigenvalues and eigenvalue gaps with probability 1 for every $0<\lambda<\lambda_0$ for suitable $\lambda_0$.\footnote{Since we could not find a good reference for this fact, we have formulated this fact as Lemma~\ref{lem:1} in Section~\ref{sec:background} and included a proof in Section~\ref{sec:proofs}.} So, generic (arbitrarily small) perturbations remove the degeneracy of $H$. 

Moreover, and this is the content of our other main result (Theorem~\ref{thm1 new} in Section~\ref{sec:results}), under the assumption that the distribution of $V$ is invariant under all unitaries or at least those commuting with $H_0$, most $H$ with $H_0$ having one eigenbasis whose elements are in MATE satisfy the ETH \eqref{ETH1}, and thus for most perturbations all states thermalize.

Apart from the general, abstract argument expressed in Theorem~\ref{thm1 new}, we also include the application to the free Fermi gas (Proposition~\ref{prop: ETH higherdim} and Corollary~\ref{cor: ETH higherdim} in Section~\ref{sec:resultsfermions}): the existence of an eigen-ONB $B_1$ of $\HfF$ in MATE (with a tolerance $\varepsilon$ that is exponentially small in $N$) can be established in any dimension $d$ of physical space; 
the thermalization of all states for $H=\HfF+\lambda V$  then follows from Theorem~\ref{thm1 new}.

Note that analogous results for the classical free gas of $N$ particles have been obtained in \cite{Beck10,dBP17,Beck17}. There, no perturbation is needed to prove thermalization of the gas for long times in any spatial dimension, but only for most and not all initial states.\footnote{It is obvious that for the classical free gas not all initial states can thermalize, not even in the sense of becoming spatially homogeneous.}
For the one-dimensional quantum mechanical free Fermi gas we obtain an even stronger result, namely thermalization for all initial states. For higher dimensions, however, we need small generic perturbations of $\HfF$ to infer thermalization for all initial states.

It should be noted that neither \cite{ST23,T24,T24b} nor our results provide meaningful estimates of the time required to reach thermal equilibrium in the (perturbed) free Fermi gas. For the classical gas such estimates have been obtained in \cite{Beck17,dBP17}. A more quantitative understanding of the non-equilibrium dynamics of one-dimensional integrable quantum gases is provided by so called Quantum Generalized Hydrodynamics, see, e.g., \cite{essler2016quench,ruggiero2020quantum}. However, this theory is not rigorous yet and applies only to rather special initial  states.

Finally, let us mention another non-trivial application of  Theorem~\ref{thm1 new}. Shortly after the first version of our paper appeared as a preprint, Hal Tasaki \cite{T24c} realized that   Theorem~\ref{thm1 new} can be applied to the Ising model in two dimensions below the critical temperature. Roughly speaking, he proves that any initial state in a given highly degenerate eigenspace of the Hamiltonian thermalizes under most slightly perturbed dynamics in the sense that the macroscopic magnetization approaches and remains very close to the corresponding microcanonical expectation value. In this system it is indeed the case that some eigenbases of the unperturbed Hamiltonian satisfy the ETH and others do not. 

This paper is structured as follows. In Section~\ref{sec:background}, we provide an overview of the background and provide further  motivation. In Section~\ref{sec:results}, we state our main results. In Section~\ref{sec:proofs}, we give the proofs. In Section~\ref{sec:conclusions}, we conclude.

\section{Motivation}
\label{sec:background}

In this section, we give some more details about the considerations outlined in the introduction.

\subsection{MATE and ETH}

We will only operate within one energy shell $\Hilbert_\mc$ and pretend that this subspace remains unchanged even when we vary the Hamiltonian; 
we take $\Hilbert_\mc$ to be ``the'' Hilbert space of the system $S$ and simply write $\Hilbert$ for it. We take for granted that $\Hilbert$ has finite dimension. Following von Neumann \cite{vonNeumann29}, we regard macroscopic observables as given which are suitably coarse-grained so that they commute with each other and their eigenvalues are rounded to the macroscopic resolution. Then $\Hilbert_\eq$ can be thought of as one of their simultaneous eigenspaces, with eigenvalues given by the thermal equilibrium values \cite{GLMTZ10,GLTZ19}. In our mathematical result, $\Hilbert_\eq$ could be any subspace, although it will play a role that $\Hilbert_\eq$ has most of the dimensions of $\Hilbert$.

We use the notation $P_\eq$ for the projection to $\Hilbert_\eq$ and
\be
\SSS(\Hilbert)=\bigl\{\psi\in\Hilbert:\|\psi\|=1\bigr\}
\ee
for the unit sphere in any given Hilbert space $\Hilbert$. 
Let $u$ be the uniform (normalized surface area) measure on $\SSS(\Hilbert)$; a $u$-distributed vector will also be said to be ``purely random.'' 
When saying that ``the statement $S(\psi)$ is true for $(1-\varepsilon)$-most $\psi\in\SSS(\Hilbert)$,'' we mean that 
\be
u\{\psi\in\SSS(\Hilbert):S(\psi) \text{ holds}\}\geq 1-\varepsilon\,.
\ee
Analogously, when saying that ``the statement $S(t)$ is true for $(1-\delta)$-most $t\in[0,\infty)$,'' we mean that
\be
\liminf_{T\to\infty} \frac{1}{T}\bigl| \{t\in[0,T]: S(t)\text{ holds}\} \bigr| \geq 1-\delta\,,
\ee
where $|\{\cdot\}|$ means the length (Lebesgue measure) of the set $\{\cdot\}$.

Since $\SSS(\Hilbert_\eq)$ is a null set in $\SSS(\Hilbert)$ relative to $u$, we regard a $\psi\in\SSS(\Hilbert)$ as being in thermal equilibrium whenever it lies in the set
\be
\mathrm{MATE}_\varepsilon =\bigl\{\psi\in\SSS(\Hilbert): \|P_\eq \psi\|^2 \geq 1-\varepsilon \bigr\}\,,
\ee
once we have chosen the desired tolerance $\varepsilon>0$. Correspondingly, $\mathrm{ETH}_\varepsilon$ denotes the condition on a Hamiltonian $H$ that 
\be\label{ETH}
   \forall \text{normalized eigenvector $\phi$ of $H$:}\quad \phi \in \mathrm{MATE}_{\varepsilon}.
\ee
In terms of the spectral decomposition $H = \sum_{e} e\, \Pi_{e}$, this condition is equivalent to $\| \Pi_{e} P_\noneq \Pi_e \| \leq {\varepsilon}$ for all eigenvalues $e$, where $P_\noneq \coloneqq I-P_\eq$ and $\|\!\cdot\!\|$ denotes the operator norm.
The statement that the ETH \eqref{ETH1} implies that every $\psi_0$ thermalizes can be formulated rigorously as follows.

\begin{prop}\label{prop:ETH}
    Suppose $\dim\Hilbert=:D<\infty$, $\Hilbert_\eq$ is a subspace and $P_\eq$ the projection to it, $\varepsilon,\delta>0$, the operator $H$ on $\Hilbert$ is self-adjoint, and satisfies $\mathrm{ETH}_{\varepsilon\delta}$. 
    Then for every $\psi_0\in\SSS(\Hilbert)$ and $(1-\delta)$-most $t\in[0,\infty)$,
    \be\label{MATEt}
    \psi_t \in \mathrm{MATE}_\varepsilon \,.
    \ee
\end{prop}

All proofs are given in Section~\ref{sec:proofs}. 
Note that the inaccuracy~($\varepsilon\delta$) assumed in the ETH must be smaller than the inaccuracy~($\varepsilon$) desired for the thermalization of $\psi_t$ by a factor given by the tolerance~($\delta$) desired for the notion of ``most $t$.'' Note that for non-degenerate $H$ the statement of 
Proposition~\ref{prop:ETH} was contained in \cite{GLMTZ10}  as a step in a proof and in \cite{tas16} as Theorem~7.1. However, to our knowledge the observation that the ETH can also be used for degenerate Hamiltonians in the form \eqref{ETH} to derive thermalization of every initial state seems not to have been mentioned before in the literature. In particular, it entails that \emph{all} $\psi_0$ thermalize for the Hamiltonian $\HfF$ of the one-dimensional free Fermi gas, which satisfies \eqref{ETH} by Theorem~\ref{thm2} in Section~\ref{sec:resultsfermions}.

\subsection{The Problem About High Degeneracy}
\label{sec:problemdegen}

Suppose that a Hamiltonian $H_0$ is highly degenerate and possesses an eigen-ONB $B=(\phi_k)_k$ that satisfies \eqref{ETHONB1} or more precisely
\be\label{ETHONB}
\forall k: \quad \phi_k\in \mathrm{MATE}_\varepsilon.
\ee
Note that if $P_\noneq\neq0$, necessarily
\be\label{eq:eps_lowerbound}
\varepsilon \geq 1/\dim \Hilbert,
\ee
which follows from considering the trace of $P_\noneq$ in the basis $B$.
The question is how much the tolerance $\varepsilon$ has to be increased to ensure that all eigenvectors are in MATE (or, equivalently, $H_0$ satisfies ETH).
The following elementary result for matrices helps us answer this question:
\begin{lemma}\label{lea1}
    Let $\varepsilon>0$. If a positive semi-definite $D\times D$ matrix $M$ has all diagonal entries $\leq \varepsilon$, then
    \be
    \|M\|\leq \varepsilon D.
    \ee
    The bound is sharp, i.e., there exists $M$ for which equality holds.
\end{lemma}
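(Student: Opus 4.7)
The plan is a very short two-step argument exploiting the fact that for positive semi-definite matrices the operator norm is dominated by the trace. First I would observe that if $M\succeq 0$, then all eigenvalues $\lambda_1,\ldots,\lambda_D$ of $M$ are non-negative, so
\[
\|M\| = \max_i \lambda_i \leq \sum_{i=1}^D \lambda_i = \tr(M).
\]
Then I would conclude by bounding the trace by the diagonal hypothesis:
\[
\tr(M) = \sum_{i=1}^D M_{ii} \leq \varepsilon D,
\]
which gives $\|M\|\leq \varepsilon D$. There is no real obstacle here; the only thing to flag is that positive semi-definiteness is essential for the first inequality (it fails badly for indefinite matrices, where one can have small diagonal but large norm only up to a factor depending on off-diagonal structure, not a sharp factor $D$).

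For sharpness, I would exhibit a concrete example achieving equality. Take $v=(1,1,\ldots,1)^\top\in\RRR^D$ and set $M=\tfrac{\varepsilon}{1}\,vv^\top$, i.e., the matrix with every entry equal to $\varepsilon$. Then $M$ is rank-one and PSD (as $\langle x, Mx\rangle = \varepsilon |\langle v,x\rangle|^2 \geq 0$), every diagonal entry equals $\varepsilon$, and $Mv = \varepsilon D\, v$, so $\|M\|=\varepsilon D$. This shows the bound cannot be improved.

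Altogether the proof is essentially two lines plus the one-line example, so I would expect no technical difficulty; the point of stating the lemma is to have a clean quantitative tool to feed into the subsequent corollary about transferring \eqref{ETHONB} from one eigenbasis to \eqref{ETH} for all eigenbases of a degenerate $H_0$, where $D$ will be the degeneracy of an eigenspace.
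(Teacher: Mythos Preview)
Your proof is correct and matches the paper's argument essentially line for line: bound the operator norm by the trace using non-negativity of the eigenvalues, bound the trace by $\varepsilon D$ from the diagonal hypothesis, and exhibit the all-$\varepsilon$ rank-one matrix for sharpness. There is nothing to add.
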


By applying this lemma to each $\Pi_{e} P_\noneq \Pi_e$, we can guarantee ETH~\eqref{ETH1} with a larger error instead of $\varepsilon$:
\begin{cor}\label{cor1}
    Let $H_0$ be a Hamiltonian with maximal degeneracy $D_E$ in the $D$-dimensional Hilbert space $\Hilbert$.
    If $P_\eq$ is any projection such that $H_0$ has an eigen-ONB $(\phi_k)_k$ satisfying \eqref{ETHONB}, then
    \be
        \forall \text{normalized eigenvector $\phi$ of $H_{0}$:}\quad \phi \in \mathrm{MATE}_{\varepsilon D_E}.
    \ee
    The bound is sharp in the sense that for any $H_{0}$ and any $\varepsilon\: (\geq 1/D)$, there exists $P_\eq$ and an eigenvalue $E$ such that $\| \Pi_E P_\noneq \Pi_E \| = \min\{\varepsilon D_E, 1\}$ and $(\phi_k)_k$ satisfies \eqref{ETHONB}.
\end{cor}

This means that the error bound we can guarantee in the ETH~\eqref{ETH} is $\varepsilon D_E$ instead of $\varepsilon$, but for our example of $H_0$ (the free Fermi gas) in $d\geq 1$ dimensions,  
\be
D_E \geq 2^{Nd}
\ee
(see Proposition~\ref{prop: ETH higherdim}), which is so large that the bound $\varepsilon D_E$ is no longer small and thus becomes useless.

Now let $P_\eq$ be any projection, $P_\noneq=I-P_\eq$, and let $H_0$ be a degenerate Hamiltonian for which one eigen-ONB satisfies \eqref{ETHONB} but others do not necessarily. By Corollary~\ref{cor1}, some eigenvectors can deviate from $\Hilbert_\eq$ by $\varepsilon D_E$, which need not be small as $D_E$ can be exponentially large in the particle number $N$. On the other hand, even if $D_E$ is that large, a \emph{typical} unit vector is not that bad:

\begin{prop}\label{prop:mostphi}
    Let $\varepsilon, \eta >0$, $\Hilbert_e$ be an eigenspace of $H_0$ with dimension $D_e$, and suppose that one eigen-ONB of $H_0$ satisfies \eqref{ETHONB}. Then for $\delta=2\exp(-C \eta^2 D_e)$ with $C=2/9\pi^3$, $(1-\delta)$-most $\phi\in\SSS(\Hilbert_e)$ lie in $\mathrm{MATE}_{\varepsilon+\eta}$.
\end{prop}

The following proposition shows that most eigenbases of an arbitrary eigenspace $\Hilbert_e$ of $H_0$ are in MATE.
\begin{prop}\label{prop: most eigenbases}
    Let $\varepsilon,\eta>0$, $\Hilbert_e$ be an eigenspace of $H_0$ with dimension $D_e$, let $(\phi_k)_{k=1}^{D_e}$ be a random ONB of $\Hilbert_e$ and suppose that one eigen-ONB of $H_0$ satisfies \eqref{ETHONB}. Then,
    \begin{align}
        \mathbb{P}\Bigl(\exists k\leq D_e: \phi_k\notin \textup{MATE}_{\varepsilon+\eta}\Bigr) \leq 2D_e \exp\left(-C\eta^2 D_e\right),
    \end{align}
    where $C=2/9\pi^3$.
\end{prop}

We can now combine Corollary \ref{cor1} and Proposition \ref{prop: most eigenbases} to show that most eigenbases of $H_0$ lie in MATE.
\begin{prop}\label{prop: most eigenbases H0}
    Let $\varepsilon>0$, $C=2/9\pi^3$, let $(\phi_k)_{k=1}^{D}$ be a random eigen-ONB of $H_0$ and suppose that one eigen-ONB of $H_0$ satisfies \eqref{ETHONB}. Then, for all $\eta>0$ such that $C \eta^3 \geq \varepsilon$,
    \begin{align}
        \mathbb{P}\Bigl(\exists k\leq D: \phi_k\notin \textup{MATE}_{\varepsilon+\eta}\Bigr) \leq 2D \exp\left(-C \eta^3/\varepsilon\right).
    \end{align}
\end{prop}
For the free Fermi gas with $N$ particles, we shall see in Section~\ref{sec:resultsfermions} that both $D$ and $1/\varepsilon$ are exponentially large in $N$.
Therefore, for $\eta$ of order $\varepsilon^\alpha$ for any $0<\alpha<1/3$, the right-hand side of the inequality above vanishes as $N\to \infty$.
In particular, for large $N$ most eigenbases satisfy $\textrm{MATE}_\eta$ with $\eta$ exponentially small in $N$.

\subsection{Consequences of Generic Perturbations}

The following lemma provides a precise formulation of the intuitively rather obvious statement that a generic perturbation will lift the degeneracy of eigenvalues and eigenvalue gaps. (In this paper, it will play no role that the eigenvalue gaps are non-degenerate, but we take note of this fact for the sake of completeness,  as the degeneracy of the eigenvalue gaps is expected to determine the time scales on which thermalization takes place.) 

\begin{lemma}\label{lem:1}
    Let $V$ be a random matrix whose distribution is continuous in the space of Hermitian $D\times D$ matrices. Then with probability 1, there exists $\lambda_0>0$ such that for all $\lambda\in (0,\lambda_0)$, the Hamiltonian $H=H_0+\lambda V$ has non-degenerate eigenvalues and eigenvalue gaps.
\end{lemma}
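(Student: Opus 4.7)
My plan is to reduce the lemma to a standard measure-theoretic fact: a nonzero polynomial on $\RRR^n$ vanishes on a set of Lebesgue measure zero. Concretely, I would express both eigenvalue non-degeneracy and gap non-degeneracy of $H_0+\lambda V$ as the non-vanishing of two polynomials in $\lambda$ and the real parameters of $V$, and then show that these polynomials are not identically zero.

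The first ingredient is the usual discriminant $\Delta_1(H)$ of the characteristic polynomial $\det(xI-H)$, which is a polynomial in the entries of $H$ and vanishes iff $H$ has a repeated eigenvalue. For gaps I would introduce a second polynomial $\Delta_2(H)$, defined as the discriminant, in an auxiliary variable $\mu$, of the polynomial $\prod_{i\neq j}\bigl(\mu-(E_i(H)-E_j(H))\bigr)$. The coefficients of this product are symmetric functions of the eigenvalues $E_i(H)$, hence polynomials in the coefficients of $\det(xI-H)$, hence polynomials in the entries of $H$; so $\Delta_2$ is itself a polynomial in the entries of $H$, and it vanishes precisely when two distinct ordered pairs $(i,j)\neq(k,l)$ with $i\neq j$, $k\neq l$ satisfy $E_i(H)-E_j(H)=E_k(H)-E_l(H)$, which is exactly the negation of gap non-degeneracy.

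Substituting $H=H_0+\lambda V$ turns $\Delta_1$ and $\Delta_2$ into polynomials in $\lambda$ and the real parameters of $V$. To see that neither is identically zero I would just exhibit a witness: pick any Hermitian $\tilde H$ with non-degenerate eigenvalues and gaps (such matrices obviously exist), set $V_*=(\tilde H-H_0)/\lambda_*$ for some $\lambda_*>0$, and observe $\Delta_i(H_0+\lambda_* V_*)\neq 0$. For each fixed $V$, the map $\lambda\mapsto\Delta_i(H_0+\lambda V)$ is a univariate polynomial, hence either identically zero or with only finitely many zeros. The set of $V$ on which it is identically zero is cut out by the simultaneous vanishing of its coefficients, each of which is a polynomial in $V$; the witness above guarantees that at least one of those coefficient polynomials is nonzero, so this set is a proper real-algebraic subvariety of the space of Hermitian matrices, hence Lebesgue-null, hence $\PPP$-null under the continuous distribution of $V$. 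Outside the union of these two null sets, I can choose $\lambda_0$ to be smaller than every positive real root of both (nonzero) polynomials $\lambda\mapsto\Delta_i(H_0+\lambda V)$; any $\lambda\in(0,\lambda_0)$ then gives non-degenerate eigenvalues and gaps.

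The main technical point---essentially the only one that is not immediate---is verifying that $\Delta_2$ is really a \emph{polynomial} in the entries of $H$ rather than merely a well-defined function of them. This rests on the standard translation, via elementary symmetric polynomials and Newton's identities, between symmetric functions of the eigenvalues and the coefficients of the characteristic polynomial. Everything else is bookkeeping.
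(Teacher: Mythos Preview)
Your proof is correct and follows the same strategy as the paper: encode both non-degeneracy conditions as the non-vanishing of polynomials in the entries of $H$ (the paper builds the gap polynomial via an explicit Vandermonde/power-trace construction in its Lemma~\ref{lem: polynomials}, you via the discriminant of $\prod_{i\neq j}(\mu-(E_i-E_j))$, which is the same object), and then use that a nonzero univariate polynomial in $\lambda$ has only finitely many roots to locate $\lambda_0$. The one presentational difference is that the paper invokes a cited lemma to obtain almost-sure non-degeneracy at the single value $\hat\lambda=1$, whereas your deterministic witness $\tilde H$ together with the proper-subvariety argument makes the proof self-contained.
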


Now we consider $H=H_0+\lambda V$, where the distribution of the random matrix $V$ is continuous and invariant under all unitaries commuting with $H_0$. Then the eigen-ONB of $H$ (which is unique up to phase factors because $H$ is non-degenerate) will be arbitrarily close, for sufficiently small $\lambda$, to some eigen-ONB of $H_0$. In fact, the eigen-ONB of $H$ (with suitably chosen phase factors) converges as $\lambda\to0$ (for fixed $V$) to an eigen-ONB $(\chi_k)_k$ of $H_0$.\footnote{This fact also entails that if we subdivide the energy axis into ``micro-canonical'' intervals, then for sufficiently small $\lambda$, each $\Hilbert_\mc$ obtained from $H_0$ stays invariant during any time interval $[0,T]$ to an arbitrary degree of precision under the time evolution generated by $H$, with the consequence that each micro-canonical subspace can be treated separately, and our simplifying assumption that $\Hilbert_\mc$ is invariant caused no harm.} Which one? That depends on $V$. Due to the unitary invariance, the ONB $(\chi_k)_k$ is, in each eigenspace $\Hilbert_e$ of $H_0$, uniformly distributed among the ONBs of $\Hilbert_e$. 
If most eigenbases of $H_0$ lie in MATE, as for the free Fermi gas,  there is ETH and thus thermalization for most perturbations $V$, see Theorem~\ref{thm1 new} below.

\subsection{Physical Relevance of Generic Perturbations}
\label{sec:generic}

In physics, we sometimes make models (e.g., write down a formula for the Hamiltonian) and sometimes consider generic situations (e.g., consider a random Hamiltonian). When we have proved that most Hamiltonians relative to a particular distribution have a certain property $P$, then this still leaves open whether the true Hamiltonian has this property. On the other hand, models involve idealizations and simplifications, and therefore are not necessarily realistic. So, when we have proved that a model has property $P$, this also leaves open whether the true Hamiltonian has this property. 

In order to increase the reliability of mathematical results about $P$, one can try to add more realism. For a model, this might mean to add corrections, such as relativistic corrections, previously neglected interaction terms between the particles of the system, or interactions with the outside that the system is not perfectly shielded from (such as gravitational interactions). Of course, this will often make the model intractable. For a random Hamiltonian, on the other hand, increasing realism may mean making the distribution narrower, either by conditioning on properties $P'$ that we believe the true Hamiltonian has (e.g., symmetries) or by choosing a distribution near some $H_0$ that we believe the true Hamiltonian is close to. The latter strategy is, in fact, a kind of combination of the two strategies of considering a model $H_0$ and considering a random Hamiltonian. 

Our assumption that the distribution of the perturbation $V$ is invariant under unitaries (at least those commuting with $H_0$) is motivated (i)~by the thought that, since we are considering very small perturbations, many different kinds of interaction with the environment may contribute to $V$ and (ii)~by the facts that this would be the case for the simplest distributions of $V$, such as the Gaussian unitary ensemble GUE, and that this allows us to answer the question how many $\psi_0$ will thermalize. The fact that a random $V$ with unitarily invariant distribution involves super-long-range super-multi-body interactions makes it seem unrealistic. On the other hand, when we consider \emph{very} weak perturbations, then already the fact that no system is \emph{exactly} closed becomes relevant---that every system is slightly interacting with an environment such as a gas of photons (or of gravitational waves etc.). 
The reason we are considering closed systems (that evolve in a Hamiltonian rather than Lindbladian way) is that being open is unnecessary for thermalization. And yet, when it comes to arbitrarily weak perturbations, a weak interaction with an environment may be expected to have a similar effect as a weak generic perturbation of $H$. After all, if the particles of the environment (say, photons)
are entangled with each other, then distant parts of the system will effectively interact with each other by interacting with different entangled photons. Thus, the system's evolution  seems quite similar to a unitary model in which every part of the system is weakly interacting with every other.

A relevant trait of our results in this paper is that they apply to random perturbations of $H_0$ that are \emph{arbitrarily weak}. Such results can be regarded as stating an \emph{instability} of a property $P$: For example, being degenerate is an unstable property in the sense that every degenerate $H_0$ possesses a neighborhood in the space of self-adjoint operators in which the degenerate operators form a null set. It is therefore not believable that the true Hamiltonian is degenerate, given that it is close to $H_0$. Also deterministic corrections to $H_0$ may be expected to break the degeneracy, but again it may be intractable to prove this.

The upshot is that assuming an arbitrarily small generic perturbation may be quite realistic after all. The typical behavior of such a perturbation may be a pretty good prediction of the behavior of the true Hamiltonian.

\section{Main Results}
\label{sec:results}

In this section, we present and discuss our main results. In Section~\ref{sec:genham}, we state our results for general Hamiltonians and in Section~\ref{sec:resultsfermions} we apply them to the free Fermi gas.

\subsection{For General Hamiltonians\label{sec:genham}}

\begin{thm}\label{thm1 new}
Let  $\Hilbert$ be a Hilbert space with $D\coloneqq \dim\Hilbert<\infty$, $\Hilbert_\eq$ any subspace, $P_\eq$ the associated orthogonal projection, and $P_\noneq \coloneqq  I-P_\eq$. 
Let $H_0\in\mathcal{L}( \Hilbert)$ be self-adjoint and assume that $H_0$ has an orthonormal eigenbasis $( \phi_k)_{k\in\{1,\ldots,D\}}$ such that for all $k$ the eigenvector $ \phi_k\in\mathrm{MATE}_{\varepsilon}$ with respect to $\Hilbert_\eq$ for some $\varepsilon>0$ (i.e., $\| P_{\noneq} \phi_k\|^2<\varepsilon$ for all $k$).
For $\lambda\in\mathbb{R}$ let $H\coloneqq H_0+\lambda V$, where $V$ is a self-adjoint operator drawn randomly from a continuous distribution invariant under conjugation with all unitaries commuting with $H_0$.
Furthermore, for $\psi_0 \in \mathbb{S}(\Hilbert)$ let $\psi_t\coloneqq  e^{-iHt}\psi_0$, let $C=2/9\pi^3$ and for $\eta>0$ satisfying $C \eta^3 \geq \varepsilon$ let $\beta=2D\exp(-C \eta^3 /\eps)$.\\
Then for all $0<\alpha<1$ there is a $\lambda_0>0$ such that for all $0<\lambda<\lambda_0$ and $(1-\alpha)(1-\beta)$-most  $V$, $H_0+\lambda V$ satisfies $\mathrm{ETH}_{\eps+2\eta}$ and hence for all $\delta>0$ and all $\psi_0\in\mathbb{S}(\Hilbert)$ for $(1-\delta)$-most $t\in[0,\infty)$,
    \begin{align}
        \psi_t \in \mathrm{MATE}_{\frac{\eps+2\eta}{\delta}}.
    \end{align}
\end{thm}

\begin{rmk}[Conditions on the distribution of $V$]
While the condition of invariance under the unitaries commuting with $H_0$ is the minimal condition we need to mathematically prove the consequence, the most relevant cases in practice are perhaps those in which the distribution is invariant under \emph{all} unitaries, as is the case for the Gaussian unitary ensemble (GUE) in which the entries of $V$ are (up to Hermitian symmetry) i.i.d.\ complex Gaussian random variables.
\end{rmk}

\begin{rmk}[Previous version of Theorem~\ref{thm1 new}]\label{rem:improved}
Generally, for a system consisting of $N$ particles, its dimension $D$ is of order $e^N$. As we discussed below Proposition~\ref{prop: most eigenbases H0}, for the free Fermi gas with $N$ particles, $\beta$ in Theorem~\ref{thm1 new} is small if $\eta$ is of order $\varepsilon^\alpha$ with $0<\alpha<1/3$ as $\varepsilon$ is exponentially small in $N$. However, if $N\varepsilon$ is not small, then $\beta$ is not small and Theorem~\ref{thm1 new} yields nothing interesting as the set of admissible perturbations $V$ becomes small or even of measure zero. In this case we can still prove thermalization (with a worse bound) for most perturbations $V$ for most initial wave functions $\psi_0\in\mathbb{S}(\Hilbert_\nu)$ from an arbitrary subspace $\Hilbert_\nu\subset\Hilbert$. For the precise version and proof of this statement we refer the reader to an earlier  version of this paper available at \url{http://arxiv.org/abs/2408.15832v3}.
\end{rmk}

\subsection{For the Free Fermi Gas}
\label{sec:resultsfermions}

In this subsection we discuss the concrete example of the free, non-relativistic Fermi gas of $N$ particles on a $d$-dimensional lattice $\Lambda\coloneqq \{1,\ldots,L\}^d$ with periodic boundary conditions, where $L\in\mathbb{N}$ and $d\geq 1$. We will in particular show for $d=1$ that all eigen-ONBs satisfy the ETH, so  Proposition~\ref{prop:ETH} applies, and for $d\geq1$ that Theorem~\ref{thm1 new} applies.

The Hamiltonian is given by
\begin{align}
    \HfF \coloneqq  - \sum_{\substack{x,y \in \Lambda\\ \mathrm{dist}(x,y)=1}} c_x^\dagger c_y, \label{eq:H0}
\end{align}
where $c_x$ and $c_x^\dagger$ denote the annihilation and creation operators of a fermion at site $x\in\Lambda$. ($H_0+2NdI$ is the negative discrete Laplacian.) 
The relevant Hilbert space 
is the $N$-particle sector of fermionic Fock space, i.e.,  $ \Hilbert \simeq \mathbb{C}^D$ with $D=\binom{L^d}{N}$. 

Like Shiraishi and Tasaki \cite{ST23,T24,T24b}, we will use a highly simplified model of ``thermal equilibrium'' defined only in terms of the spatial distribution of particles. In particular, the restriction to a micro-canonical energy shell is not relevant for us in the following. We will discuss extensions to more realistic models of thermal equilibrium in Remark~\ref{rem:density}. Choose any spatial region  $\Gamma \subseteq \Lambda$, let 
\be \mu\coloneqq  |\Gamma|/|\Lambda|
\ee
be its relative size, and let
\begin{align}
    N_\Gamma \coloneqq \sum_{x \in \Gamma} c_x^\dagger c_x \label{eq: NGamma}
\end{align}
be the number operator of the particles in $\Gamma$. 
Throughout this section we define the equilibrium subspace  
$\Hilbert_{\eq,\eta}$ 
for a given threshold $\eta>0$ as the spectral subspace of $N_\Gamma$ specified by the condition $\left|\frac{N_{\Gamma}}{N} -\mu\right|\leq \eta$, i.e.,
\begin{align}\label{Hgamma}
    P_{\eq,\eta} \coloneqq  \mathbf{1}_{[N(\mu-\eta),N(\mu+\eta)]}(N_\Gamma)\,,
\end{align}
and set $P_{\noneq,\eta}\coloneqq I-P_{\eq,\eta}$.
Thus, $\Hilbert_\mathrm{eq,\eta}$ contains those states $\psi$ for which the Born distribution of $N_\Gamma/N$ is supported in an $\eta$-neighborhood around $\mu$. Note that this is a much stronger condition than just requiring that the expectation value of $N_\Gamma/N$ in a state $\psi$ lies in this interval.
In Remark~\ref{rem:density} below we discuss a more realistic definition of $P_{\eq,\eta}$ that takes into account not only the number of particles  in one region $\Gamma$, but a coarse-grained density all over $\Lambda$.

For $d=1$ we can show that $\HfF$ satisfies the version \eqref{ETH1} of the ETH, i.e., that every eigenvector of $\HfF$ is close to  $\Hilbert_{\eq,\eta}$ for large $N$, assuming that $\Gamma$ is an interval.

\begin{thm}[ETH for the free Fermi gas in 1d]\label{thm2}
    Let $d=1$,  $L$ prime, $46\leq N< L/4$,  $\Gamma\subset\Lambda$ an interval, $\eta > \frac{2(\ln N + 1)}{N}$, and $\Hilbert_{\eq,\eta}$   and   $P_{\noneq,\eta} $ as above.
    Then every normalized eigenstate $\phi$ of $\HfF$ satisfies
    \begin{align}\label{eqthm2}
	\bigl\| P_{\noneq,\eta}\phi\bigr\|^2 \leq \frac{32\ln N}{\eta^2 N}.
	\end{align}
\end{thm}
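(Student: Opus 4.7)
My plan is to apply Markov's inequality to the positive operator $(N_\Gamma-N\mu)^2$: for any normalised $\phi$,
\be
\|P_{\noneq,\eta}\phi\|^2 \le \frac{\langle\phi|(N_\Gamma-N\mu)^2|\phi\rangle}{(\eta N)^2}\,,
\ee
reducing the task to showing $\langle\phi|(N_\Gamma-N\mu)^2|\phi\rangle\le 32\,N\ln N$ for every normalised eigenstate $\phi$ of $\HfF$. The key structural input is the description of the eigenspaces. In the momentum basis, with single-particle energies $\epsilon_m=-2\cos(2\pi m/L)$, the Slater determinants $|K\rangle$ for $K\subset\{0,\dots,L-1\}$ with $|K|=N$ form an eigenbasis with $E_K=\sum_{m\in K}\epsilon_m$. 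A number-theoretic argument relying on $L$ prime and $N<L/4$ should show that the only coincidences $E_K=E_{K'}$ arise from independent flips $m\leftrightarrow L-m$ inside the momentum pairs $\{m,L-m\}$; consequently each eigenspace $\mathcal{K}_E$ is labelled by a ``pair profile'' (each pair empty, half-full, or full) and has dimension $2^s$, where $s$ is the number of half-full pairs.

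Writing $A:=N_\Gamma-N\mu=\sum_{k\ne k'}\tilde f(k,k')\,a_k^\dagger a_{k'}$ with $\tilde f(k,k'):=L^{-1}\sum_{x\in\Gamma}e^{i(k-k')x}$ (the $k=k'$ diagonal in momentum is cancelled by the $-N\mu$), I would decompose $A=A_\|+A_\perp$ into the intra-pair part ($\epsilon_k=\epsilon_{k'}$, i.e.\ $k'=-k$) and the cross-pair part ($\epsilon_k\ne\epsilon_{k'}$). By the pair-profile structure, $A_\|$ preserves each $\mathcal{K}_E$ while $A_\perp$ maps $\mathcal{K}_E$ entirely into $\mathcal{K}_E^\perp$, so for $\phi\in\mathcal{K}_E$ the orthogonal decomposition $\langle\phi|A^2|\phi\rangle = \|A_\|\phi\|^2+\|A_\perp\phi\|^2$ holds. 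On the $2^s$-dimensional eigenspace $A_\|$ acts (with fermionic signs) as a sum $\sum_{i=1}^s c_i\,\sigma_x^{(i)}$ of commuting operators on disjoint pair factors, with $c_i=\tilde f(k_i,-k_i)$, whose operator norm equals $\sum_i|c_i|$. The bound $|c_i|\le 1/(L|\sin(2\pi m_i/L)|)$, together with a harmonic-sum estimate over at most $s\le N$ half-full pairs (worst case: bunched near momentum $0$ or $L/2$), gives $\sum_i|c_i|=O(\ln N)$, hence $\|A_\|\phi\|^2=O((\ln N)^2)$.

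The main obstacle will be the cross-pair bound $\|A_\perp\phi\|^2=\sum_{K'\notin\mathcal{K}_E}|\langle K'|A_\perp|\phi\rangle|^2$, because different $K\in\mathcal{K}_E$ can be one-hop-connected to the same destination $K'$ and their amplitudes may interfere. For each $K'$, the contributing $K$'s are of the form $K=K'\setminus\{m'\}\cup\{m\}$ with $m'\in K'$, $m\notin K'$, $\epsilon_m-\epsilon_{m'}=E-E_{K'}=:\Delta\ne 0$. A second number-theoretic lemma for $L$ prime should bound the multiplicity $|\{(m,m'):\epsilon_m-\epsilon_{m'}=\Delta\}|$ by a small constant (for generic $\Delta$ only the reflection $(m,m')\leftrightarrow(L-m,L-m')$ produces a coincidence). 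Combining with the single-Slater identity $\sum_{K'}|A_{K'K}|^2=\|A|K\rangle\|^2\le N\mu(1-\mu)\le N/4$ and Cauchy--Schwarz over the bounded number of contributing $K$'s should yield $\|A_\perp\phi\|^2=O(N\ln N)$, where the logarithm absorbs the handful of exceptional $\Delta$ with higher multiplicity. Summing yields $\langle\phi|A^2|\phi\rangle\le 32\,N\ln N$ for $N\ge 46$, and Markov gives the stated bound. The hypothesis $\eta>2(\ln N+1)/N$ is used to guarantee that the mean shift $|\langle N_\Gamma\rangle_\phi - N\mu|\le\ln N+1$ (coming from the intra-pair coherences captured by $A_\|$) still leaves headroom for the Markov estimate to be meaningful.
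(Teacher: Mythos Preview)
Your strategy is sound and genuinely different from the paper's. The paper does not split $A=N_\Gamma-N\mu$ into intra- and cross-pair parts; instead it expands $\langle\phi,N_\Gamma\phi\rangle$ and $\langle\phi,N_\Gamma^2\phi\rangle$ directly in the eigenspace basis $\{\Psi_{k'}\}$, computes all matrix elements $\langle\Psi_{k'},N_\Gamma^j\Psi_{k''}\rangle$ (only $k',k''$ differing in at most $j$ components contribute), bounds each cross term by geometric-sum estimates together with Cauchy--Schwarz on the coefficients $\alpha_{k'}$, and then applies Chebyshev centered at the true mean $\langle\phi,N_\Gamma\phi\rangle$ (this last step is precisely where the hypothesis $\eta>2(\ln N+1)/N$ enters). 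Your route via the orthogonal splitting $\|A\phi\|^2=\|A_\|\phi\|^2+\|A_\perp\phi\|^2$ is more structural and, if completed, yields a sharper second moment $O(N)$ instead of $O(N\ln N)$; in your approach the $\eta$ hypothesis is then superfluous, so your final remark misattributes its role.

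The real gap is your ``second number-theoretic lemma'': you only conjecture that the gap multiplicity $M_\Delta=\#\{(m,m'):\epsilon_m-\epsilon_{m'}=\Delta\}$ is bounded. It is in fact true, with $M_\Delta\le 4$ for every $\Delta\neq 0$, and it follows from the same cyclotomic fact underlying the eigenspace description: for $L$ prime the unique rational linear relation among $\zeta^0,\dots,\zeta^{L-1}$ (with $\zeta=e^{2\pi i/L}$) is $\sum_j\zeta^j=0$, so $\cos\tfrac{2\pi m}{L}-\cos\tfrac{2\pi m'}{L}=\cos\tfrac{2\pi n}{L}-\cos\tfrac{2\pi n'}{L}$ forces the multiset equality $\{m,L{-}m,n',L{-}n'\}=\{m',L{-}m',n,L{-}n\}$, which for $\Delta\neq 0$ leaves only $(n,n')\in\{m,L{-}m\}\times\{m',L{-}m'\}$. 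With this, your Cauchy--Schwarz step gives $\|A_\perp\phi\|^2\le 4\max_K\|A|K\rangle\|^2\le 4N\mu(1-\mu)\le N$. One smaller point: your claim that the operator norm of $A_\|$ on the eigenspace \emph{equals} $\sum_i|c_i|$ is not obvious, since the fermionic signs prevent a clean tensor factorization; but each Hermitian summand $c_i\,a_{k_i}^\dagger a_{-k_i}+\bar c_i\,a_{-k_i}^\dagger a_{k_i}$ still has norm $|c_i|$ on the eigenspace, so the triangle inequality $\|A_\|\|\le\sum_i|c_i|\le \ln N+1$ suffices.
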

The condition in Theorem~\ref{thm2} that $L$ must be prime guarantees that the eigenvalues of the one-body Hamiltonian on such a chain are rationally independent (see \cite{ST23} and similar arguments in \cite{keating2015spectra}). As a consequence, the only degeneracies in the many-body spectrum arise from the fact that one-body eigenstates with momentum $k$ and $-k$ have the same energy.  The latter degeneracy was shown by Shiraishi and Tasaki to be removed by piercing the ring with a small magnetic flux. Theorem~\ref{thm2} shows that even if these degeneracies are not removed, $\HfF$ still satisfies the ETH.

Together with Proposition~\ref{prop:ETH}, Theorem~\ref{thm2} implies that all initial states reach MATE:

\begin{cor}[Thermalization of the  free Fermi gas in 1d]\label{cor:free1d}
 Let $d=1$, $N\geq 46$, and  let $L$, $\Gamma$, $\eta$, $\Hilbert_{\eq,\eta}$ and $ P_{\eq,\eta}$ be as in Theorem~\ref{thm2}.
 Let $\varepsilon,\delta>0$ be such that $\varepsilon\delta\geq \frac{32\ln N}{\eta^2 N}$.
     Then for every $\psi_0\in\SSS(\Hilbert)$ and $(1-\delta)$-most $t\in[0,\infty)$, $\psi_t \coloneqq  e^{-i\HfF t}\psi_0$ satisfies
    \be 
      \left\|P_{\noneq,\eta}\psi_t\right\|^2 <\varepsilon\,,~~\text{i.e.,}~~\psi_t\in\mathrm{MATE}_\varepsilon\,.
    \ee
\end{cor}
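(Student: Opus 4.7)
The corollary is a direct composition of the eigenstate thermalization content of Theorem~\ref{thm2} with the abstract dynamical thermalization statement in Proposition~\ref{prop:ETH}, so my plan is essentially a bookkeeping argument setting up the parameters correctly.

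First, I would identify the ingredients of Proposition~\ref{prop:ETH} with those of the corollary: take $\Hilbert_\eq := \Hilbert_{\eq,\eta}$, $P_\eq := P_{\eq,\eta}$, and $H := \HfF$. The hypotheses $d=1$, $L$ prime, $46\leq N<L/4$, $\Gamma\subset\Lambda$ an interval, and $\eta>(2\ln N+2)/N$ coincide with those of Theorem~\ref{thm2}, so Theorem~\ref{thm2} applies and yields, for every normalized eigenstate $\phi$ of $\HfF$,
\[
\|P_{\noneq,\eta}\phi\|^2\;\le\;\frac{32\ln N}{\eta^2 N}\;\le\;\varepsilon\delta,
\]
where the second inequality is the standing assumption on $\varepsilon\delta$. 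Equivalently, $\|P_{\eq,\eta}\phi\|^2\ge 1-\varepsilon\delta$, i.e.\ $\phi\in\mathrm{MATE}_{\varepsilon\delta}$.

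This is exactly the ETH assumption \eqref{ETH} required by Proposition~\ref{prop:ETH}. I would then quote that proposition to conclude: for every $\psi_0\in\SSS(\Hilbert)$ the evolved state $\psi_t=e^{-i\HfF t}\psi_0$ satisfies $\psi_t\in\mathrm{MATE}_\varepsilon$ for $(1-\delta)$-most $t\in[0,\infty)$, i.e.\ $\|P_{\noneq,\eta}\psi_t\|^2<\varepsilon$ on a time-set of density at least $1-\delta$. The strict inequality in the corollary is consistent with the typical Markov-style time-averaging step underlying Proposition~\ref{prop:ETH}: if $X(t):=\|P_{\noneq,\eta}\psi_t\|^2$ has time-mean at most $\varepsilon\delta$, then $\{t:X(t)\ge\varepsilon\}$ has density at most $\delta$, so on the complement one has the strict bound $X(t)<\varepsilon$.

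Because Theorem~\ref{thm2} is doing all the genuine work, there is no substantive obstacle here; the only items to verify are the cosmetic matches above (that $\Hilbert_{\eq,\eta}$ defined via the spectral projection \eqref{Hgamma} of the self-adjoint operator $N_\Gamma$ is indeed a subspace of the ambient $N$-particle space, and that the parameter regimes agree). The corollary should then follow in a few lines.
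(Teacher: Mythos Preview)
Your proposal is correct and matches the paper's approach exactly: the paper derives the corollary simply by combining Theorem~\ref{thm2} (which supplies the ETH bound $\|P_{\noneq,\eta}\phi\|^2\le 32\ln N/(\eta^2 N)\le\varepsilon\delta$ for every eigenvector) with Proposition~\ref{prop:ETH}. Your observation about the strict inequality via the Markov step is a fine cosmetic addition.
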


For general $d\geq 1$, we can still prove MATE for one eigenbasis $B_1$ of~$\HfF$, in fact with better bounds. To define this eigenbasis, we need to introduce some notations first.

For odd $L$ let
\begin{align}
		\mathcal{K} \coloneqq  \left\{\frac{2\pi}{L}\nu\Bigg| \nu \in \left\{0,\pm 1,\dots, \pm \frac{L-1}{2}\right\}^d\right\},
\end{align}
and for even $L$ let
\begin{align}
		\mathcal{K} \coloneqq  \left\{\frac{2\pi}{L}\nu\Bigg| \nu \in \left\{0,\pm 1,\dots, \pm \Bigl( \frac{L}{2}-1 \Bigr), \frac{L}{2}\right\}^d\right\}.
\end{align}
For $k\in\mathcal{K}$ we define
\begin{align}
    a_k^\dagger \coloneqq  \frac{1}{L^{d/2}} \sum_{x\in\Lambda} e^{ik\cdot x} c_x^\dagger.
\end{align}
Let $\mathcal{K}^N_{\neq}$ be the set of $k=(k_1,\ldots,k_N)\in\mathcal{K}^N$ such that $k_i\neq k_j$ for all $i\neq j$. The permutation group $S_N$ acts on $\mathcal{K}^N_{\neq}$ via $(k_1,\ldots,k_N)\mapsto (k_{\pi(1)},\ldots,k_{\pi(N)})$ for any $\pi\in S_N$. Let $\tilde{\mathcal{K}}^N \subset\mathcal{K}_{\neq}^N$ contain exactly one representative from each orbit, i.e., from each permutation class.
We define
\begin{align}
    B_1&\coloneqq \bigl\{|\Psi_k\rangle: k\in\tilde{\mathcal{K}}^N\bigr\}, \text{ where}\\
	|\Psi_k\rangle &\coloneqq  a_{k_1}^\dagger a_{k_2}^\dagger \dots a_{k_N}^\dagger |\Phi_{\textup{vac}}\rangle \label{eq: energy eigenstate highdim}
\end{align}
with $|\Phi_{\textup{vac}}\rangle$ the vacuum vector in Fock space.
The states $|\Psi_k\rangle$ are $\binom{L^d}{N}$ different eigenfunctions of the unperturbed Hamiltonian $\HfF$ and therefore form an orthonormal basis of $\Hilbert$.

For this eigenbasis $B_1$ of $\HfF$ we can prove MATE using similar methods as Tasaki \cite{T24} used in the case of the free fermion chain in one dimension.

\begin{prop}[MATE for one eigenbasis of the free Fermi gas, any $d$]\label{prop: ETH higherdim}
    Let $d\geq 1$, $\Gamma\subset\Lambda$ arbitrary,   $0<\eta < \frac{3}{2}\mu (1-\mu)$, and $\Hilbert_{\eq,\eta}$   and   $P_{\noneq,\eta} $ as in \eqref{Hgamma}.
    Then every eigenstate $\Psi_k\in B_1$ of~$\HfF$ given by \eqref{eq:H0} satisfies 
	\begin{align}\label{eq: ETH higherdim}
	\bigl\| P_{\noneq,\eta}\Psi_k\bigr\|^2 < 2 e^{-\frac{\eta^2}{3\mu(1-\mu)}N}.
	\end{align}
    Furthermore, if $N<L/2d$ then the maximal degree of degeneracy $D_E$ is at least $2^{Nd}$. 
\end{prop}

As an immediate consequence of Proposition~\ref{prop: ETH higherdim} and  Theorem~\ref{thm1 new} we obtain the following corollary: 

\begin{cor}[Thermalization of the perturbed free Fermi gas in any dimension]\label{cor: ETH higherdim}
 Let $d\geq 1$ and $\Gamma$,  $\eta$, $\Hilbert_{\eq,\eta}$, and $P_{\noneq,\eta}$ be as in Proposition~\ref{prop: ETH higherdim}. 
For $\lambda\in\mathbb{R}$ let $H\coloneqq \HfF+\lambda V$, where $V$ is drawn randomly from a continuous distribution invariant under conjugation with all unitaries commuting with~$\HfF$.
For $\psi_0\in \mathbb{S}(\Hilbert)$  let $\psi_t\coloneqq e^{-iHt}\psi_0$, let $C=2/9\pi^3$, $\varepsilon\coloneqq 2\exp(-\eta^2 N/(3\mu(1-\mu)))$ and for $\tilde{\eta}>0$ satisfying $C \tilde{\eta}^3 \geq\varepsilon$ let $\beta=2D\exp(-C \tilde{\eta}^3 /\varepsilon)$. Then for all $\delta>0$ and $0<\alpha<1$ there is a $\lambda_0>0$ such that for all $0<\lambda<\lambda_0$ and $(1-\alpha)(1-\beta)$-most $V$, every $\psi_0\in\mathbb{S}(\Hilbert)$ is such that for $(1-\delta)$-most $t\in[0,\infty)$,
\begin{align}
\left\|P_{\noneq,\eta}\psi_t\right\|^2 \leq \frac{\varepsilon + 2\tilde{\eta}}{\delta},~~\text{i.e.,}~~\psi_t\in\mathrm{MATE}_{(\varepsilon + 2\tilde{\eta})/\delta} \,.
\end{align}
\end{cor}

\begin{rmk}\label{rem:density}
In a similar way as in \cite{T24}, Theorem~\ref{thm2} and Proposition~\ref{prop: ETH higherdim} (and therefore also Corollary~\ref{cor:free1d} and Corollary~\ref{cor: ETH higherdim}) can easily be generalized to the situation in which we define equilibrium by requiring that in $m\in\mathbb{N}$ subsets of $\Gamma_i \subset \Lambda$ the fraction of particles is close to $\mu_i \coloneqq  |\Gamma_i|/|\Lambda|$. To this end, note that
\begin{align}
P_{\noneq,\eta} \coloneqq  P\left(\exists\, i=1,\dots,m: \;\Biggl|\frac{N_{\Gamma_i}}{N} - \mu_i \Biggr| > \eta\right) \leq \sum_{i=1}^m P\left(\Biggl|\frac{N_{\Gamma_i}}{N} - \mu_i \Biggr| > \eta\right),
\end{align}
where $P(\dots)$ denotes the orthogonal projection onto the specified subspace. Let $\mu_*\coloneqq \argmax_{\mu_1,\ldots,\mu_m} (\mu_i(1-\mu_i))$. Then it follows from Proposition~\ref{prop: ETH higherdim} that
\begin{align}
\left\| P_{\noneq,\eta} \Psi_k\right\|^2 \leq 2 \sum_{i=1}^m e^{-\frac{\eta^2}{3\mu_i(1-\mu_i)}N} \leq 2m\, e^{-\frac{\eta^2}{3\mu_*(1-\mu_*)}N}. \label{ineq: neq mult int} 
\end{align}
Thus, as long as $m$ is not too large and $N$ is large, the right-hand side in \eqref{ineq: neq mult int} is small. 
\end{rmk}

\section{Proofs}
\label{sec:proofs}

\subsection{Proof of Proposition~\ref{prop:ETH}}

Let $P_\noneq \coloneqq I-P_\eq$, let $\mathcal{E}$ be the spectrum of $H$, and $\Pi_e$ the projection to the eigenspace of $H$ with eigenvalue $e$. By denoting the long-time average by $\overline{X} \coloneqq \mathop{\lim}\limits_{T\to\infty} \frac{1}{T} \int_{0}^{T}dt\: X(t)$, we obtain
\begin{subequations}
    \begin{align}
    \overline{\scp{\psi_t}{P_\noneq\psi_t}}
    &= \sum_{e,e'\in\mathcal{E}} \overline{e^{i(e-e')t}} \, \scp{\psi_0}{\Pi_e P_\noneq \Pi_{e'}\psi_0}\\
    &= \sum_{e\in\mathcal{E}} \underbrace{\scp{\psi_0}{\Pi_e P_\noneq \Pi_e\psi_0}}_{\leq \varepsilon\delta \|\Pi_e \psi_0\|^2 \text{ by \eqref{ETH}}}\\
    &\leq \varepsilon\delta \sum_{e\in\mathcal{E}} \scp{\psi_0}{\Pi_e \psi_0}=\varepsilon\delta\,.
    \end{align}
\end{subequations}
Thus, for every $\eta>0$, there is $T_0>0$ such that for every $T>T_0$,
\be
\frac{1}{T}\int_0^T dt\: \scp{\psi_t}{P_\noneq\psi_t} < \varepsilon\delta + \eta\,.
\ee
By the Markov inequality,
\be
\frac{1}{T} \Bigl|\bigl\{t\in [0,T]: \scp{\psi_t}{P_\noneq\psi_t} >\varepsilon \bigr\} \Bigr| \leq \frac{\varepsilon\delta+\eta}{\varepsilon}= \delta + \frac{\eta}{\varepsilon}\,.
\ee
Taking the limit superior as $T\to\infty$, we find that
\be
\limsup_{T\to\infty}\frac{1}{T} \Bigl|\bigl\{t\in [0,T]: \scp{\psi_t}{P_\noneq\psi_t} >\varepsilon \bigr\} \Bigr| \leq \delta + \frac{\eta}{\varepsilon}\,.
\ee
Since $\eta>0$ was arbitrary, we must have that
\be
\limsup_{T\to\infty}\frac{1}{T} \Bigl|\bigl\{t\in [0,T]: \scp{\psi_t}{P_\noneq\psi_t} >\varepsilon \bigr\} \Bigr| \leq \delta
\ee
as claimed.

\subsection{Proof of Lemma~\ref{lea1}}

Since all the eigenvalues are non-negative, the maximal eigenvalue $\| M \|$ is bounded by $\tr M$, which is in turn bounded by $\varepsilon D$ by assumption.
To see that the bound is sharp, consider the matrix with all entries equal to $\varepsilon$, which has an eigenvector with all entries $1$ and eigenvalue $\varepsilon D$.

\subsection{Proof of Corollary~\ref{cor1}}

Corollary~\ref{cor1} is a consequence of Lemma~\ref{lea1}.
Let $P_{\noneq} \coloneqq I-P_\eq$.
Denote by $\Pi_e$ the orthogonal projection onto the eigenspace of $H_0$ corresponding to $e$.
In the basis $(\phi_k)$ restricted to the eigenspace corresponding to eigenvalue $e$, denoted by $(\phi_{e,j})_{j=1}^{D_e}$, the matrix corresponding to $\Pi_e P_{\noneq} \Pi_e$ has all diagonal entries $\leq \varepsilon$ by assumption.
Thus by Lemma~\ref{lea1}, we have
\be
    \| \Pi_e P_{\noneq} \Pi_e \| \leq \varepsilon D_e \leq \varepsilon D_{E}
\ee
for all $e$, and equivalently $H_0$ satisfies the ETH~\eqref{ETH} with tolerance $\varepsilon D_{E}$.

For the second part, let $\varepsilon' \coloneqq \min\{\varepsilon,\: 1/D_{E}\}$, let $E$ be an eigenvalue with degeneracy $D_E$, and $P_\noneq$ be the projection operator onto the one-dimensional subspace spanned by the vector
\be
    \psi \coloneqq \sqrt{\varepsilon'} \sum_{j=1}^{D_E} \phi_{E,j} + \sqrt{\eta} \sum_{e\: (\neq E)} \sum_{j=1}^{D_e} \phi_{e,j},
\ee
where $\eta \coloneqq \frac{1-\varepsilon' D_E}{D - D_E}$ to satisfy the normalization $\| \psi \| = 1$ (in the case $D=D_E$, note that $\varepsilon'=1/D_E$ by \eqref{eq:eps_lowerbound} and we only have the first term).
Here, $\varepsilon' \geq 1/D$ implies $\eta \leq 1/D \leq \varepsilon'$.
Then, since all diagonal elements of $P_\noneq$ in the basis $(\phi_{e,j})_{e,j}$ are either $\varepsilon'$ or $\eta$, the basis $(\phi_k)$ satisfies the condition~\eqref{ETHONB} for this $P_\noneq$.
Moreover, the matrix corresponding to $\Pi_E P_{\noneq} \Pi_E$ has all entries equal to $\varepsilon'$ in the basis $(\phi_{E,j})_{j=1}^{D_E}$.
Hence, we have $\| \Pi_E P_{\noneq} \Pi_E \| = \min\{\varepsilon D_{E},\: 1\}$ as claimed.

\subsection{Proof of Proposition~\ref{prop:mostphi}}

As shown by Reimann \cite[Eq.~(7)]{Reimann2015}, L\'evy's lemma implies that for a uniformly distributed unit vector $\phi$ in a $D$-dimensional Hilbert space $\Hilbert$ and a self-adjoint operator $A$ (which is not a multiple of the identity) and every $\eta\geq 0$,
\be\label{Reimann-bound}
\PPP\Bigl( \bigl| \langle\phi,A\phi\rangle -\tr(A)/D \bigr| \geq \eta \Bigr) \leq 2\exp \Bigl(-\frac{C \eta^2 D}{\Delta_A^2} \Bigr)
\ee
with $C$ as in Proposition~\ref{prop:mostphi} and $\Delta_A> 0$ the difference between the largest and the smallest eigenvalue of $A$. Now we specialize to our case with $\Hilbert=\Hilbert_e$, $D=D_e$, and $A= \Pi_e P_{\noneq} \Pi_e$ with $\Pi_e$ the projection to $\Hilbert_e$. Since $0\leq A \leq I$, we have that $\Delta_A\leq 1$ (so the right-hand side of \eqref{Reimann-bound} can only become larger if we replace $\Delta_A$ by 1). By assumption, there is a basis $(\phi_k)$ for which every vector lies in $\mathrm{MATE}_\varepsilon$; evaluating the trace in this basis, we find that $\tr(A)\leq \varepsilon D_e$. Thus,
\begin{subequations}\label{ConcentratinForEachState}
\begin{align}
    \mathbb{P}\Bigl(\phi\notin \textup{MATE}_{\varepsilon+\eta}\Bigr) &= \mathbb{P}\Bigl(\|P_{\mathrm{neq}}\phi\|^2 > \varepsilon+\eta\Bigr)\\
    &\leq \mathbb{P}\Bigl(\|P_{\mathrm{neq}}\phi\|^2 \geq \tr(A)/D_e + \eta\Bigr)\\
    &\leq \mathbb{P}\Bigl(\left|\|P_{\mathrm{neq}}\phi\|^2-\tr(A)/D_e\right|\geq \eta\Bigr)\\
    &\leq 2\exp\left(-C\eta^2 D_e\right).
\end{align}
\end{subequations}
by \eqref{Reimann-bound}, which completes the proof.

\begin{rmk}
    An alternative argument based on the Markov inequality instead of L\'evy's lemma (and the fact that the average of $\|P_{\noneq}\phi\|^2$ over $\SSS(\Hilbert_e)$ is $\leq\varepsilon$, still assuming that one eigen-ONB of $H_0$ satisfies \eqref{ETHONB}) yields for any $\delta>0$ that $(1-\delta)$-most $\psi\in\SSS(\Hilbert_e)$ lie in $\text{MATE}_{\varepsilon/\delta}$.
\end{rmk}

\subsection{Proof of Proposition~\ref{prop: most eigenbases}}

Since each eigenvector $\phi_k$ in the random eigen-ONB is uniformly distributed in a $D_e$-dimensional Hilbert space $\Hilbert_e$, the inequality~\eqref{ConcentratinForEachState} applied to each $\phi_{k}$ yields
\begin{subequations}
\begin{align}
    \mathbb{P}\Bigl(\exists k\leq D_e: \phi_k\notin \textup{MATE}_{\varepsilon+\eta}\Bigr) &\leq \sum_{k=1}^{D_e}\mathbb{P}\Bigl(\phi_k\notin \textup{MATE}_{\varepsilon+\eta}\Bigr)\\
    &\leq 2D_e \exp\left(-C\eta^2 D_e\right),
\end{align}
\end{subequations}
which completes the proof.

\subsection{Proof of Proposition~\ref{prop: most eigenbases H0}}
For each eigenvalue $e$ of $H_0$ let $\Hilbert_e$ denote the corresponding eigenspace and $D_e$ its dimension.
Note that
    \begin{align}\label{eq: p all eigenvectors mate}
        \mathbb{P}\Bigl(\exists k\leq D: \phi_k\notin \textup{MATE}_{\varepsilon+\eta}\Bigr) \leq \sum_{e\in E} \mathbb{P}\Bigl(\exists k\leq D: \phi_k\in \Hilbert_e\ \mathrm{and}\ \phi_k\notin \textup{MATE}_{\varepsilon+\eta}\Bigr).
    \end{align}
For all eigenvalues $e$ with $D_e\leq \eta/\eps$, the corresponding eigenvectors lie in $\mathrm{MATE}_{\varepsilon+\eta}$ by Corollary~\ref{cor1}.
Hence the corresponding summands in \eqref{eq: p all eigenvectors mate} are zero.
On the other hand, if $D_e\geq \eta/\eps$, Proposition~\ref{prop: most eigenbases} implies that
    \begin{align}
        \mathbb{P}\Bigl(\exists k\leq D: \phi_k\in \Hilbert_e\ \mathrm{and}\ \phi_k\notin \textup{MATE}_{\varepsilon+\eta}\Bigr)\leq 2D_e \exp\left(-C \eta^2 D_e\right)
    \end{align}
Since the function $x\to x e^{-x}$ is decreasing for $x\geq 1$, by the assumption on $\eta$ that $C \eta^3 \geq \varepsilon$, the right hand side is bounded from above by $2\frac{\eta}{\eps} \exp\left(-C \eta^3 /\eps\right)$.
Furthermore, since the number of eigenspaces of dimension larger than $\eta/\eps$ is at most $D \eps/\eta$, we obtain
the bound
    \begin{align}
        \mathbb{P}\Bigl(\exists k\leq D: \phi_k\notin \textup{MATE}_{\varepsilon+\eta}\Bigr) \leq 2D \exp\left(-C \eta^3/\varepsilon\right).
    \end{align}

\subsection{Proof of Lemma~\ref{lem:1}}\label{sec:pflem1}

We cite a key fact from Appendix~A in \cite{TTV22-mathe}:

\begin{lemma}\label{lem:TTV22-mathe}
    If $H$ has continuous distribution in the Hermitian $n\times n$ matrices, then with probability 1 it has non-degenerate eigenvalues and eigenvalue gaps.
\end{lemma}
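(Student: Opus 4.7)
The plan is to reduce this to a purely geometric-measure-theoretic statement: the set of ``bad'' Hermitian matrices (those with a repeated eigenvalue or a repeated eigenvalue gap) forms a proper algebraic subvariety of the real vector space $\mathcal{H}_n$ of Hermitian $n\times n$ matrices, and hence is a Lebesgue null set. Under the interpretation of ``continuous distribution'' as absolutely continuous with respect to Lebesgue measure on $\mathcal{H}_n$ (which is what the statement must be assuming, since a merely atomless distribution could be concentrated on a degenerate subvariety), this immediately yields probability zero for the bad event.

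For the degenerate-eigenvalue part, I would use the classical discriminant argument: the coefficients of the characteristic polynomial $\chi_H(x)=\det(xI-H)$ are polynomials in the entries of $H$, so its discriminant $\Delta(H)$ — which vanishes precisely when $\chi_H$ has a multiple root — is a polynomial in the entries of $H$. It is not identically zero, as witnessed by $H=\operatorname{diag}(1,2,\ldots,n)$, so $\{\Delta=0\}$ is a proper algebraic subvariety of $\mathcal{H}_n$ and therefore has Lebesgue measure zero.

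For the degenerate-gap part, letting $e_1,\ldots,e_n$ denote the multiset of eigenvalues of $H$, I would work with the polynomial
\begin{equation*}
G(e_1,\ldots,e_n)\;:=\;\prod_{\substack{(i,j,k,l):\,i\neq j,\,k\neq l\\(i,j)\neq(k,l)}}\bigl[(e_i-e_j)-(e_k-e_l)\bigr],
\end{equation*}
which vanishes exactly when some gap is repeated. The index set is invariant under the diagonal $S_n$-action $(i,j,k,l)\mapsto(\sigma(i),\sigma(j),\sigma(k),\sigma(l))$, so $G$ is symmetric in $e_1,\ldots,e_n$; by the fundamental theorem of symmetric polynomials, $G$ is a polynomial in the elementary symmetric functions of the $e_m$, which are (up to sign) the coefficients of $\chi_H$ and thus polynomials in the entries of $H$. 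Hence $G(H)$ is a polynomial on $\mathcal{H}_n$. It is not identically zero, as witnessed by $H=\operatorname{diag}(1,2,4,\ldots,2^{n-1})$, for which all differences $e_i-e_j$ are pairwise distinct. Therefore $\{G=0\}$ is again a proper algebraic subvariety and has Lebesgue measure zero.

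Combining the two null sets and invoking absolute continuity of the distribution of $H$ completes the proof. The main subtlety I anticipate is the symmetry verification for $G$ — one must check carefully that the chosen index set is closed under the relabeling action, so that $G$ genuinely descends from a function of the unordered eigenvalue multiset to a polynomial in the matrix entries; everything else (existence of witnesses, and the fact that the zero set of a nontrivial polynomial on $\mathbb{R}^N$ is a Lebesgue null set) is standard.
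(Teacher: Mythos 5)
Your proposal is correct, and it is self-contained where the paper is not: the paper does not actually prove Lemma~\ref{lem:TTV22-mathe} but cites it from Appendix~A of \cite{TTV22-mathe}. The overall strategy you use (the bad matrices form the zero set of a not-identically-vanishing polynomial in the matrix entries, hence a Lebesgue null set, and ``continuous'' is read as absolutely continuous) is exactly the strategy underlying the cited fact and the paper's own adjacent Lemma~\ref{lem: polynomials}, and your discriminant argument for the degenerate-eigenvalue part is identical to the paper's. Where you genuinely diverge is the degenerate-gap part: the paper shows that the product of gap differences is a polynomial in the entries of $H$ by an explicit computation, writing it as $(\det V)^2=\det(V^TV)$ for a Vandermonde matrix $V$ in the eigenvalue differences and expanding the resulting power sums via the binomial theorem into products of traces $\tr(A^p)$; you instead observe that your product $G$ is invariant under the diagonal $S_n$-relabeling of its index set and invoke the fundamental theorem of symmetric polynomials to express it through the characteristic-polynomial coefficients. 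Your route is shorter and more conceptual (and your symmetry check is sound, since the conditions $i\neq j$, $k\neq l$, $(i,j)\neq(k,l)$ are all preserved by any bijection of the index set), while the paper's route is constructive and gives the polynomial explicitly; your witness $\operatorname{diag}(1,2,4,\ldots,2^{n-1})$ does have pairwise distinct differences by uniqueness of binary representations, so non-triviality of $G$ holds. One cosmetic difference worth noting: your $G$ also vanishes when two eigenvalues coincide (take $(i,j,k,l)=(i,j,j,i)$), so its zero set is not exactly the degenerate-gap locus, but since you take the union with the discriminant's zero set anyway, this costs nothing.
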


Lemma~\ref{lem:1} says more in that there is a whole interval $(0,\lambda_0)$ of $\lambda$ values for which $H_0+\lambda V$ will have non-degenerate eigenvalues and eigenvalue gaps. As a preparation for the proof, we
establish the following lemma:

\begin{lemma}\label{lem: polynomials}
    The set of Hermitian $n\times n$ matrices with degenerate eigenvalues can be written as the zero set of a polynomial in the matrix entries. Likewise, the set of Hermitian $n\times n$ matrices with distinct eigenvalues but with degenerate eigenvalue gaps can be written as the zero set of a polynomial in the matrix entries. 
\end{lemma}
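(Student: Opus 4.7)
The plan is to exploit the fact that the characteristic polynomial $p_H(x)=\det(xI-H)$ has coefficients that are (up to sign) the elementary symmetric functions of the eigenvalues $\lambda_1,\ldots,\lambda_n$ of $H$, and these coefficients are visibly polynomials in the entries of $H$. By the fundamental theorem of symmetric polynomials, any symmetric polynomial in $\lambda_1,\ldots,\lambda_n$ is therefore a polynomial in the entries of $H$. Both sets in the lemma will be carved out by applying the discriminant to appropriate auxiliary polynomials whose coefficients are symmetric in the eigenvalues.

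For the first set, the observation is standard: $H$ has a degenerate eigenvalue iff $p_H$ has a repeated root iff $\Delta_1(H):=\mathrm{disc}(p_H)=0$. Since $\mathrm{disc}$ is a universal polynomial in the coefficients of $p_H$, $\Delta_1$ is a polynomial in the entries of $H$, and it is not identically zero (any diagonal matrix with distinct diagonal entries is a witness). So the first set is exactly the zero set of $\Delta_1$.

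For the second set, I would introduce the auxiliary polynomial
\begin{equation}
g_H(x):=\prod_{i\neq j}\bigl(x-(\lambda_i-\lambda_j)\bigr),
\end{equation}
whose roots are all the ordered differences of pairs of eigenvalues. The multiset of these differences is invariant under permutations of the $\lambda_i$, so each coefficient of $g_H$ is a symmetric polynomial in the $\lambda_i$ and thereby a polynomial in the entries of $H$. Then $\Delta_2(H):=\mathrm{disc}(g_H)$ is likewise a polynomial in the entries of $H$, and it vanishes exactly when two of the ordered gaps $\lambda_i-\lambda_j$ coincide, which happens either because some $\lambda_i=\lambda_j$ (a repeated eigenvalue forces the gaps $\lambda_i-\lambda_k$ and $\lambda_j-\lambda_k$ to coincide) or because a genuine non-trivial gap collision occurs. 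Non-triviality of $\Delta_2$ can again be checked on a diagonal example. Consequently, $\{\Delta_2=0\}$ is precisely the union of the two sets in the lemma, and the second set equals $\{\Delta_2=0\}\cap\{\Delta_1\neq 0\}$; in particular it is contained in the zero set of the single polynomial $\Delta_2$.

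The main subtlety I expect is the wording ``zero set of a polynomial'' for the second set, since strictly it is locally closed rather than Zariski-closed. I would resolve this by pointing out that for the intended use in Lemma~\ref{lem:1} it is sufficient that both sets lie inside the zero set of the nontrivial polynomial $\Delta_1\cdot\Delta_2$ on the real vector space of Hermitian matrices, which already implies they have Lebesgue measure zero and intersect any real affine line $\{H_0+\lambda V:\lambda\in\mathbb{R}\}$ in only finitely many points as soon as the line is not contained in the common zero set. Verifying the symmetry claim for the coefficients of $g_H$ and tracking the fact that $\Delta_1,\Delta_2$ are nontrivial on the space of Hermitian (rather than arbitrary) matrices are the only points requiring care; both follow from exhibiting diagonal witnesses with generic spectrum.
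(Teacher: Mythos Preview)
Your proposal is correct, and the underlying polynomial you construct for each set coincides with the paper's: for the first set both use the discriminant of the characteristic polynomial, and for the second set both effectively compute the discriminant of the degree-$n(n-1)$ polynomial whose roots are all ordered eigenvalue differences $\lambda_i-\lambda_j$.

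The genuine difference is in how the ``polynomial in the matrix entries'' claim is justified for the second set. You argue abstractly: the coefficients of $g_H$ are symmetric in $\lambda_1,\ldots,\lambda_n$, hence by the fundamental theorem of symmetric polynomials they are polynomials in the elementary symmetric functions, which are (up to sign) the coefficients of $\det(xI-H)$ and thus polynomials in the entries of $H$; then $\mathrm{disc}(g_H)$ is a universal polynomial in those coefficients. The paper instead follows Parlett's strategy explicitly: it enumerates the differences as $x_1,\ldots,x_M$, forms the $M\times M$ Vandermonde matrix $V$, and shows that $(\det V)^2=\det B$ where $B_{ij}=\sum_k x_k^{i+j-2}$; expanding $(\lambda_k-\lambda_l)^{i+j-2}$ by the binomial theorem expresses each $B_{ij}$ in terms of traces $\tr(A^p)$, which are manifestly polynomials in the entries of $A$. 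Your route is shorter and more conceptual; the paper's route is more explicit and yields a concrete trace formula without invoking the fundamental theorem of symmetric polynomials.

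Your observation about the second set being only locally closed (i.e., $\{\Delta_2=0\}\cap\{\Delta_1\neq 0\}$ rather than a genuine zero set) is well taken; the paper's statement is loose on this point too, and its proof of the subsequent lemma indeed only uses that the second set is \emph{contained} in the zero locus of a nontrivial polynomial, exactly as you note suffices for the application.
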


\begin{proof}
    Let $A$ be a Hermitian $n\times n$ matrix with eigenvalues $\lambda_1,\dots,\lambda_n$. The matrix $A$ has degenerate eigenvalues if and only if its discriminant
    \begin{align}
        \mbox{disc}(A) = \prod_{i<j} (\lambda_i-\lambda_j)^2 
    \end{align}
    vanishes. Since the discriminant of a matrix can be written as a polynomial in the matrix entries, see, e.g., Lemma~1 in \cite{Parlett02}, the first claim follows. 

    For the second claim, we follow the proof strategy of Lemma~1 in \cite{Parlett02} and adapt it to our situation. Let $A$ be a Hermitian $n\times n$ matrix with distinct eigenvalues $\lambda_1,\dots,\lambda_n$. Then $A$ has degenerate eigenvalue gaps if and only if
    \begin{align}
        \prod_{(i,j,k,l)\in I} \left(\lambda_i-\lambda_j-(\lambda_k-\lambda_l)\right) = 0,\label{eq: prod gaps}
    \end{align}
    where
    \begin{align}
        I \coloneqq  \left\{(i,j,k,l) \in [n]^4: (i\neq k\mbox{ or } j\neq l) \mbox{ and } (i\neq j \mbox{ or } k\neq l)\right\}
    \end{align}
    and $[n]\coloneqq \{1,\dots,n\}$. Since the tuples $(i,j,k,l) \in I$ with $i=j$ and $k\neq l$ (or $k=l$ and $i\neq j$) lead to non-zero factors in \eqref{eq: prod gaps} (due to the non-degeneracy of the eigenvalues), we can replace the set $I$ in \eqref{eq: prod gaps} by the set
    \begin{align}
        I'\coloneqq  \left\{(i,j,k,l) \in [n]^4: (i\neq k\mbox{ or } j\neq l) \mbox{ and } (i\neq j \mbox{ and } k\neq l)\right\}.
    \end{align}
    We enumerate the eigenvalue differences $(\lambda_i-\lambda_j)_{i\neq j}$ as $x_1\coloneqq \lambda_1-\lambda_2, x_2\coloneqq \lambda_1-\lambda_3, \dots, x_{n-1} \coloneqq \lambda_1-\lambda_n, x_n\coloneqq \lambda_2-\lambda_1, x_{n+1}\coloneqq \lambda_2-\lambda_3,\dots, x_M \coloneqq  \lambda_n-\lambda_{n-1}$, where $M\coloneqq n(n-1)$, and consider the Vandermonde matrix $V=V(x_1,\dots,x_M)$ which is defined as
    \begin{align}
        V(x_1,\dots,x_M) = \left(\begin{matrix}
            1 & x_1 & x_1^2 & \dots & x_1^{M-1}\\
            1 & x_2 & x_2^2 & \dots & x_2^{M-1}\\
            1 & x_3 & x_3^2 & \dots & x_3^{M-1}\\
            \vdots & \vdots & \vdots & \ddots & \vdots \\
            1 & x_M & x_M^2 & \dots & x_M^{M-1}
        \end{matrix}\right).
    \end{align}
    It is well known that
    \begin{align}
        \det V = \prod_{1\leq i < j \leq M} (x_j-x_i)
    \end{align}
    and thus
\begin{subequations}
    \begin{align}
        \left(\det V\right)^2 &= \prod_{1\leq i<j\leq M} (x_j-x_i)^2 = (-1)^{M(M-1)/2} \prod_{i\neq j} (x_i-x_j)\\
        &= (-1)^{M(M-1)/2} \prod_{(i,j,k,l)\in I'} (\lambda_i-\lambda_j-(\lambda_k-\lambda_l)).
    \end{align}
\end{subequations}
Therefore $A$ has degenerate eigenvalue gaps if and only if $(\det V)^2=0$.

    We define the $M\times M$ matrix $B=(B_{ij})$ in the following way:
    \begin{align}
        B_{ij} \coloneqq  \sum_{k=1}^{M} x_k^{i+j-2}.
    \end{align}
    One immediately sees that $B=V^T V$ which implies $\det B = (\det V)^2$. Obviously, $B_{11}=M$ and for $(i,j)\neq (1,1)$ we have that
\begin{subequations}
    \begin{align}
        B_{ij} &= \sum_{k,l=1}^n (\lambda_k-\lambda_l)^{i+j-2}\\
        &= \sum_{k,l=1}^n \sum_{p=0}^{i+j-2} \binom{i+j-2}{p} \lambda_k^{i+j-2-p} \lambda_l^p (-1)^p\\
        &= \sum_{p=0}^{i+j-2} \binom{i+j-2}{p} (-1)^p \tr(A^{i+j-2-p}) \tr(A^p).
    \end{align}
\end{subequations}
    We conclude that the entries of $B$ are polynomials in the entries of $A$ and thus that also $\det B=(\det V)^2$ is a polynomial in the entries of $A$. This proves the second claim.
\end{proof}

\begin{lemma}\label{lem:fixedV}
    Suppose there is $\tilde\lambda>0$ such that $H_0+\tilde\lambda V$ has non-degenerate eigenvalues and eigenvalue gaps. Then there is $\lambda_0>0$ such that for every $\lambda\in(0,\lambda_0)$, $H=H_0+\lambda V$ has non-degenerate eigenvalues and eigenvalue gaps.
\end{lemma}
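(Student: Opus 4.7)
The plan is to encode the combined non-degeneracy of eigenvalues and eigenvalue gaps as the non-vanishing of a single polynomial in the matrix entries, and then to observe that along the affine line $\lambda \mapsto H_0 + \lambda V$ this becomes an ordinary polynomial in one real variable, whose zeros must be isolated.

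By Lemma~\ref{lem: polynomials}, there exist polynomials $P_1, P_2$ in the entries of a Hermitian matrix $A$ such that $P_1(A)=0$ iff $A$ has a repeated eigenvalue, and $P_2(A)=0$ iff $A$ has distinct eigenvalues together with at least one coincident pair of eigenvalue gaps. Setting $P := P_1 P_2$, the zero set of $P$ is the union of the zero sets of $P_1$ and $P_2$, which is precisely the set of Hermitian matrices with a repeated eigenvalue or a repeated eigenvalue gap; equivalently, $P(A) \ne 0$ iff $A$ has non-degenerate eigenvalues \emph{and} non-degenerate eigenvalue gaps. (If one prefers, the single polynomial $P_2 = (\det V)^2$ from the proof of Lemma~\ref{lem: polynomials} already has the same property, since a repeated eigenvalue forces two rows of the Vandermonde matrix to coincide; but using $P_1 P_2$ lets one quote the lemma directly.)

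Now define $f(\lambda) := P(H_0 + \lambda V)$. Since every entry of $H_0 + \lambda V$ is an affine function of $\lambda$, $f$ is an ordinary polynomial in the real variable $\lambda$. The hypothesis that $H_0 + \hat\lambda V$ has non-degenerate eigenvalues and eigenvalue gaps gives $f(\hat\lambda) \ne 0$, so $f \not\equiv 0$ and therefore has only finitely many real roots. Let $\lambda_0$ denote the smallest positive root of $f$ if any such root exists, and $\lambda_0 := 1$ otherwise. Then $\lambda_0 > 0$ and $f(\lambda) \ne 0$ for every $\lambda \in (0, \lambda_0)$, so by the characterization of $P$, the operator $H = H_0 + \lambda V$ has non-degenerate eigenvalues and non-degenerate eigenvalue gaps throughout this interval, which is the claim.

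I do not anticipate any substantive obstacle here: once Lemma~\ref{lem: polynomials} is available, the remainder is just the elementary fact that a non-zero univariate polynomial has isolated zeros. The only point requiring a moment's care is verifying that $P_1 P_2$ indeed captures the \emph{joint} non-degeneracy condition rather than one of the two alone, which is settled by the short case distinction above.
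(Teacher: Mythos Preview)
Your proof is correct and follows essentially the same approach as the paper: both rely on Lemma~\ref{lem: polynomials} and the fact that a non-zero univariate polynomial has only finitely many real zeros. The only difference is cosmetic---you multiply $P_1$ and $P_2$ into a single polynomial and argue once, while the paper treats the two conditions sequentially (first finding an interval on which eigenvalues are simple via $\hat P_1$, then shrinking it via $\hat P_2$); your version is a mild streamlining of the same idea.
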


\begin{proof}
    We consider $H$ as a function of $\lambda$. By Lemma~\ref{lem: polynomials} there is a polynomial $P_1$ in the entries of $H$ and therefore a polynomial $\tilde{P}_1$ in $\lambda$ such that its zeros are exactly the matrices with degenerate eigenvalues. By assumption, $\tilde{P}_1(\tilde{\lambda})\neq 0$. Thus, $\tilde{P}_1$ does not vanish identically and therefore $\tilde{P}_1$ has only finitely many zeros. We know that $\lambda=0$ is one of the zeros and therefore there exists a $\tilde{\lambda}_0$ such that $H=H_0+\lambda V$ has non-degenerate eigenvalues for all $\lambda \in (0,\tilde{\lambda}_0)$.\footnote{\label{fn:alg}Here is an alternative proof of the eigenvalue statement of Lemma~\ref{lem:fixedV}: Consider the polynomial $P(\lambda,E)=\det(H_0+\lambda V- E I)$ in 2 variables, which vanishes if and only if $E$ is an eigenvalue of $H_0+\lambda V$. $P$ has degree $\leq D$ because $\det$ is a degree-$D$ polynomial in the matrix entries, and each entry is a degree-1 polynomial in $(\lambda,E)$. Since for $\lambda=\tilde\lambda$, $P$ has $D$ distinct zeros, $P$ has degree $D$ and is square-free. Then the number of singular points in the plane is finite (e.g., \cite[Bemerkung 3.2]{Fis94}). The zero set $S$ of $P$ is known to consist of finitely many smooth curves that are either closed or tend to infinity in both directions, and can intersect themselves or each other only in singular points. There are only finitely many points $p$ on the smooth curves where the tangent is vertical, i.e., parallel to the $E$ axis, because at such points $p$, $\partial P/\partial E(p)=0$, so $p$ is a joint zero of $P$ and $\partial P/\partial E$; since these two polynomials have no common prime factor (see the proof of \cite[Bemerkung 3.2]{Fis94}), they intersect only in finitely many points by B\'ezout's theorem. Now along any vertical line $L$, points $p$ in $L\cap S$ are simple roots unless either $p$ is a singular point or a curve has a vertical tangent at $p$. Thus, for every $\lambda$ except finitely many exceptions, every zero of $P$ in $E$ is simple, so $H$ is non-degenerate. Let $\lambda_0$ be the smallest positive exception.}

    Again it follows from Lemma~\ref{lem: polynomials} that there is a polynomial $P_2$ in the entries of $H$ and therefore a polynomial $\tilde{P}_2$ in $\lambda \in (0,\tilde{\lambda}_0)$ such that its zeros are exactly the matrices with degenerate eigenvalue gaps (but non-degenerate eigenvalues). Note that $\tilde{P}_2$ can be considered as a polynomial on $[0,\infty)$ that vanishes in the (finitely many) zeros of $\tilde{P}_1$. Because of $\tilde{P}_2(\tilde{\lambda})\neq 0$, $\tilde{P}_2$ does not vanish identically and has therefore only finitely many zeros. Its zeros in $(0,\tilde{\lambda}_0)$ are matrices with non-degenerate eigenvalues but degenerate eigenvalue gaps. Since there are only finitely many such zeros, it follows that there exists a $0<\lambda_0\leq \tilde{\lambda}_0$ such that $H=H_0+\lambda V$ has non-degenerate eigenvalues and non-degenerate eigenvalue gaps for all $\lambda \in (0,\lambda_0)$. 
\end{proof}

Now Lemma~\ref{lem:1} follows from Lemma~\ref{lem:TTV22-mathe} and Lemma~\ref{lem:fixedV} by fixing $\tilde\lambda=1$ and choosing a $V$ for which $H_0+V$ has non-degenerate eigenvalues and eigenvalue gaps.

\subsection{Proof of Theorem~\ref{thm1 new}}
By Lemma~\ref{lem:1}, the probability $\mathbb{P}_V(H_0+\lambda V \ \text{is non-degenerate for all}\ 0<\lambda<\lambda_1)$ over the distribution of $V$ tends to 1 as $\lambda_1\to 0$.
We choose $\lambda_1$ such that this probability is larger than $(1-\alpha)^{1/2}$.
Let $S$ denote the set of all $V$ such that $H_0+\lambda V$ is non-degenerate for all $0<\lambda <\lambda_1$.
By Proposition~\ref{prop: most eigenbases H0}, $(1-\beta)$-most normalized eigen-ONB's of $H_0$ lie in $\mathrm{MATE}_{\varepsilon+\eta}$ with $\beta = 2D\exp( -C \eta^3 / \varepsilon )$.
For $V\in S$ consider the probability  $\mathbb{P}_U(H_0+\lambda  U V U^\dagger \ \text{satisfies } \mathrm{ETH}_{\varepsilon+2\eta} \ \text{for all }\ 0<\lambda<\lambda_0)$ over the unitaries $U$ commuting with $H_0$.
For $\lambda_0\to 0$ this probability becomes $\geq (1-\beta)$.
Since the distribution of $V$ is invariant under these unitaries, and since the spectrum is invariant under conjugation, we obtain
\begin{multline}
    \mathbb{P}_V(H_0+\lambda V\ \text{satisfies}\ \text{ETH}_{\varepsilon+2\eta}\ \text{for all}\ 0<\lambda<\lambda_0| V \in S)\\
    =\mathbb{E}_U \mathbb{P}_V(H_0+\lambda UVU^\dagger\ \text{satisfies}\ \text{ETH}_{\varepsilon+2\eta}\ \text{for all}\ 0<\lambda<\lambda_0|V \in S)\\
    =\mathbb{E}_V\Bigl(\mathbb{P}_U(H_0+\lambda UVU^\dagger\ \text{satisfies}\ \text{ETH}_{\varepsilon+2\eta}\ \text{for all}\ 0<\lambda<\lambda_0)|V \in S\Bigr)
\end{multline}
which tends to something $\geq (1-\beta)$ for $\lambda_0\to 0$.
Hence, there is a $0<\lambda_0\leq \lambda_1$ such that for $(1-\alpha)^{1/2}(1-\beta)$-most $V$ in $S$ the eigenbasis of $H_0+\lambda  V  $ satisfies ETH$_{\varepsilon+2\eta}$ for all $0<\lambda<\lambda_0$.
Combining this with the probability that $V$ lies in $S$, we conclude that for $(1-\alpha)(1-\beta)$-most $V$ the perturbed Hamiltonian satisfies $\text{ETH}_{\varepsilon+2\eta}$ for all $0<\lambda<\lambda_0$.
The statement about thermalization then follows from Proposition~\ref{prop:ETH}.

\subsection{Proof of Theorem~\ref{thm2}}

In dimension one, the eigenstates $\Psi_k$ defined in \eqref{eq: energy eigenstate highdim} have energies $E_k \coloneqq  -2\sum_{i=1}^N \cos k_i$.
We first remark that the assumption that $L$ is a prime number ensures that all degeneracies in the spectrum of $\HfF$ are trivial, i.e., only due to changing the signs of the $k_i$.  This is shown in \cite{ST23} at the beginning of the proof of Theorem~3.2. Note that the model considered there agrees with the model in the present paper in the case that $\theta=0$. (This parameter is introduced in \cite{ST23} to remove the degeneracies and to this end has to be chosen to be small but non-zero.)

The proof of Theorem~\ref{thm2} will make use of several propositions that we formulate now and prove in the subsequent subsections. The first one states that the expectation of $N_\Gamma$ in an arbitrary eigenstate of $\HfF$ is close to $N|\Gamma|/|\Lambda|$, provided that $N$ is sufficiently large. 

\begin{prop}[Expectation of $N_\Gamma$ in arbitrary eigenstates] \label{prop: exp NGamma es} 
Let $d=1$ and let $k=(k_1,\dots,k_N) \in \tilde{\mathcal{K}}^N$. Let $E_k \coloneqq  -2\sum_{i=1}^N \cos k_i$ be the corresponding eigenvalue and $\Hilbert_{E_k} = \mathrm{span}\{\Psi_{k'}: k'_j = \pm k_j \mbox{\; for all\; }j\}$ the corresponding eigenspace. Let $\Gamma \subset\Lambda$ be an interval, and let $\phi \in \mathbb{S}(\Hilbert_{E_{k}})$. Then
	\begin{align}
		\Biggl|\langle\phi,N_\Gamma\phi\rangle - N\frac{|\Gamma|}{|\Lambda|} \Biggr|	\leq \ln N+1.
	\end{align}	
\end{prop}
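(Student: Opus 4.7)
The approach is to work in the momentum basis of $a_k^\dagger$. Writing $c_x = L^{-1/2}\sum_p e^{ipx} a_p$, one obtains
\[
N_\Gamma = \frac{1}{L}\sum_{p,q\in\mathcal{K}} \hat\chi_\Gamma(q-p)\,a_p^\dagger a_q, \qquad \hat\chi_\Gamma(r):=\sum_{x\in\Gamma}e^{irx}.
\]
The diagonal ($p=q$) part equals $(|\Gamma|/L)\sum_p a_p^\dagger a_p$, which acts as $N|\Gamma|/L$ on the $N$-particle sector and reproduces the target value exactly. So the entire task reduces to bounding the off-diagonal remainder $M:=\langle\phi,N_\Gamma\phi\rangle-N|\Gamma|/L$ by $\ln N+1$.

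Next I would expand $\phi=\sum_\varepsilon c_\varepsilon\Psi_{k^\varepsilon}$ over a linearly independent family of Slater determinants indexed by sign patterns $\varepsilon\in\{\pm1\}^N$ of the components of $k$, and compute $\langle\Psi_{k^\varepsilon},a_p^\dagger a_q\Psi_{k^{\varepsilon'}}\rangle$. For $p\neq q$ this matrix element is nonzero only when the multisets $\{k^{\varepsilon'}_j\}$ and $\{k^\varepsilon_j\}$ differ in exactly one element, which (given that both lie in $\Hilbert_{E_k}$ and thus consist of signed copies of the $k_j$'s) forces $\varepsilon,\varepsilon'$ to differ at a single index $i_0$ with $q=-p=\varepsilon'_{i_0}k_{i_0}$. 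A fermionic sign calculation then gives the matrix element equal to $\pm L^{-1}\hat\chi_\Gamma(2k^{\varepsilon'}_{i_0})$. Grouping terms by the flip index and applying the Cauchy--Schwarz inequality $\sum_\varepsilon|c_\varepsilon|\,|c_{F_i(\varepsilon)}|\le \|c\|_2^2=1$ (where $F_i$ flips the $i$-th sign) yields
\[
|M|\le \frac{1}{L}\sum_{i=1}^N |\hat\chi_\Gamma(2k_i)|.
\]

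To finish, I would use the explicit Dirichlet-kernel form $|\hat\chi_\Gamma(q)|=|\sin(q|\Gamma|/2)/\sin(q/2)|\le 1/|\sin(q/2)|$, valid for $\Gamma$ an interval, together with the standard estimate $|\sin(\pi x)|\ge 2\|x\|_\ZZZ$, where $\|\cdot\|_\ZZZ$ is the distance to the nearest integer. Writing $k_i=2\pi\nu_i/L$, this gives $|\hat\chi_\Gamma(2k_i)|/L\le 1/(2m_i^*)$, with $m_i^*\in\{0,1,\ldots,(L-1)/2\}$ the distance of $2\nu_i$ to the nearest multiple of $L$. Because $L$ is prime (hence odd), the doubling map $\nu\mapsto 2\nu\bmod L$ is a bijection on $\ZZZ/L\ZZZ$, so distinct $\nu_i$ produce distinct values of $2\nu_i\bmod L$, and each positive value of $m_i^*$ is attained by at most two indices. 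A harmonic-series estimate then gives
\[
\sum_{i:\,k_i\neq 0}\frac{1}{m_i^*}\le 2\sum_{m=1}^{\lceil N/2\rceil}\frac{1}{m}\le 2(1+\ln N),
\]
and altogether $|M|\le 1+\ln N$, as claimed.

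The principal obstacle is the middle step: correctly cataloguing which off-diagonal fermionic matrix elements on $\Hilbert_{E_k}$ are nonzero and their signs, especially in degenerate cases where some $k_j=-k_i$ or $k_i=0$ force certain sign patterns to yield vanishing or duplicated Slater determinants (so the $\Psi_{k^\varepsilon}$ do not form a $2^N$-basis). One must apply Cauchy--Schwarz in a genuine orthonormal basis of $\Hilbert_{E_k}$ and count only legitimate single-flip transitions. Once this bookkeeping is done, the remaining Dirichlet-kernel and harmonic-sum estimates are routine.
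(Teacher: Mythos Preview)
Your proposal is correct and follows essentially the same route as the paper: expand $\phi$ in the $\Psi_{k'}$ basis, isolate the diagonal contribution $N|\Gamma|/L$, observe that only single-flip off-diagonal matrix elements survive, apply Cauchy--Schwarz to the coefficients, and finish with a Dirichlet-kernel/harmonic-sum estimate over the $N$ momenta. The paper carries this out in position space via $\langle\Psi_{k'},c_x^\dagger c_x\Psi_{k''}\rangle$ and a geometric sum over $x\in\Gamma$, while you work directly in momentum space with $a_p^\dagger a_q$ and the doubling-map bijection (which needs only $L$ odd, not prime); the degenerate cases $k_j=0$ or $k_l=-k_m$ are handled in the paper exactly as you anticipate---they simply drop terms from the $j$-sum and the bound persists.
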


For the proof of Proposition~\ref{prop: exp NGamma es} we need the following proposition concerning the expectation of $N_\Gamma$ in the eigenstates $\Psi_k$ of $\HfF$. 

\begin{prop}[Expectation of $N_\Gamma$ in eigenstates $\Psi_k$]\label{prop: exp NGamma Psik}
	Let $d\geq 1$, let $k, k' \in \tilde{\mathcal{K}}^N$ and $x\in \Gamma$. Then
\begin{align}
\langle \Psi_k,N_\Gamma \Psi_k\rangle = N \frac{|\Gamma|}{|\Lambda|}
\end{align}
and
\begin{align}
    \langle \Psi_{k},c_x^\dagger c_x\Psi_{k'}\rangle = \frac{1}{|\Lambda|} \mathrm{sgn}(\tilde{\sigma}) e^{i(k'_{\tilde{\sigma}^{-1}(l)}-k_l)\cdot x}
\end{align}
if only $k_l$ does not appear in $k'$ and $\tilde{\sigma}\in\mathcal{S}_N$ is the permutation such that $k'_{\tilde{\sigma}^{-1}(m)} = k_m$ for all $m\neq l$.
Thus, if  exactly one component of $k$ does not appear in $k'$, then 
	\begin{align}
\left|\langle\Psi_{k},N_\Gamma\Psi_{k'}\rangle 
  \right| \leq \frac{|\Gamma|}{|\Lambda|}.
	\end{align}
If more than one component of $k$ does not appear in $k'$, then \be \langle\Psi_{k},N_\Gamma\Psi_{k'}\rangle = 0\,.
\ee
\end{prop}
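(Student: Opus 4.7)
The plan is to expand $c_x^\dagger c_x$ in the momentum basis. Inverting the definition of $a_k^\dagger$ yields $c_x=|\Lambda|^{-1/2}\sum_{k\in\mathcal{K}} e^{ik\cdot x}\,a_k$, so
\begin{equation}
c_x^\dagger c_x=\frac{1}{|\Lambda|}\sum_{p,q\in\mathcal{K}}e^{i(q-p)\cdot x}\,a_p^\dagger a_q.
\end{equation}
The operator $a_p^\dagger a_q$ annihilates $\Psi_{k'}$ unless $q$ coincides with one of the $k'_j$, and when it does not annihilate, it produces (up to a fermionic sign) the Slater determinant with that occurrence of $q$ replaced by $p$; thus the multiset of momenta changes in at most one element. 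Consequently $\langle\Psi_k,a_p^\dagger a_q\Psi_{k'}\rangle=0$ whenever the multisets of components of $k$ and $k'$ differ in more than one element, which proves the vanishing statement of the proposition after summing over $(p,q)$ and $x\in\Gamma$.

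For the diagonal case $k=k'$, only pairs with $p=q\in\{k_1,\dots,k_N\}$ contribute, and each such term equals $1$ because $a_p^\dagger a_p$ is the mode-$p$ number operator and $\Psi_k$ has unit occupation in each of the pairwise distinct modes $k_i$. Hence $\langle\Psi_k,c_x^\dagger c_x\Psi_k\rangle=N/|\Lambda|$ independently of $x$, and summation over $x\in\Gamma$ yields $\langle\Psi_k,N_\Gamma\Psi_k\rangle=N|\Gamma|/|\Lambda|$.

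For the off-diagonal case in which exactly one component of $k'$ is missing from $k$, let $l$ be that index and let $j_*$ be the unique index with $k_{j_*}\notin\{k'_1,\dots,k'_N\}$. The only surviving pair in the double sum is $(p,q)=(k_{j_*},k'_l)$. I would then anticommute $a_{k'_l}$ past the first $l-1$ creation operators of $\Psi_{k'}$, picking up $(-1)^{l-1}$, and identify the resulting Fock vector with $\mathrm{sgn}(\tilde{\sigma})\,\Psi_k$ using the permutation $\tilde{\sigma}$ fixed by the matching of the remaining $N-1$ momenta. Combining the two signs gives the displayed formula for $\langle\Psi_k,c_x^\dagger c_x\Psi_{k'}\rangle$; summing over $x\in\Gamma$ and bounding the resulting phase sum by $|\Gamma|$ via the triangle inequality (using $|e^{i(k'_l-k_{j_*})\cdot x}|=1$) then yields $|\langle\Psi_k,N_\Gamma\Psi_{k'}\rangle|\leq|\Gamma|/|\Lambda|$.

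The only genuine technicality is the sign bookkeeping in the off-diagonal case; everything else follows directly from the canonical anticommutation relations and the orthonormality of Slater determinants, so I do not foresee any serious obstacles.
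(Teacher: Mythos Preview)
Your approach is correct and takes a genuinely different route from the paper's. The paper works in position space: it expands the $a_{k_j}$'s defining $\Psi_k,\Psi_{k'}$ back into the $c_x$'s, invokes an auxiliary combinatorial identity (Lemma~\ref{lem: comb formula}) expressing $\langle\Phi_{\textup{vac}},c_{x_N}\cdots c_{x_1}a_{k_1}^\dagger\cdots a_{k_N}^\dagger\Phi_{\textup{vac}}\rangle$ as a signed sum over $\mathcal{S}_N$, and then collapses the resulting $N$-fold position sums using $\sum_{y\in\Lambda}e^{i(k'-k)\cdot y}=|\Lambda|\,\delta_{k,k'}$. You instead stay in momentum space by Fourier-expanding the local density $c_x^\dagger c_x=\frac{1}{|\Lambda|}\sum_{p,q}e^{i(q-p)\cdot x}a_p^\dagger a_q$ and reading off the matrix elements directly from the Slater-determinant structure. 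Your route is shorter for this particular proposition and makes the ``differ in at most one component'' selection rule transparent without any auxiliary lemma. The paper's position-space route, on the other hand, sets up the explicit permutation-sum machinery that it subsequently recycles almost verbatim for the two-body operator $c_x^\dagger c_x c_y^\dagger c_y$ in the variance computations (Propositions~\ref{prop: var NGamma Psik} and~\ref{prop: var NGamma arb es}). You correctly flag the off-diagonal sign bookkeeping as the one place needing care; once carried out (the $(-1)^{l-1}$ from anticommuting $a_{k'_l}$ combined with the reordering sign needed to bring $a_{k_{j_*}}^\dagger a_{k'_1}^\dagger\cdots\widehat{a_{k'_l}^\dagger}\cdots a_{k'_N}^\dagger|\Phi_{\textup{vac}}\rangle$ into the order defining $\Psi_k$) it does reproduce the $\mathrm{sgn}(\tilde\sigma)$ of the statement.
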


For the proof of Proposition~\ref{prop: exp NGamma Psik} we need the following lemma:

\begin{lemma}\label{lem: comb formula}
	Let $d\geq 1$, $x_1,\dots,x_N \in \{1,\dots,L\}$, and let $k \in \mathcal{K}^N$ such that $\tau(k)\coloneqq (k_{\tau(1)},\dots,k_{\tau(N)})\in \tilde{\mathcal{K}}^N$ for some $\tau\in\mathcal{S}_N$, where $\mathcal{S}_N$ denotes the symmetric group. Then
	\begin{align}
		\langle \Phi_{\textup{vac}},c_{x_N}\dots c_{x_1} a_{k_1}^\dagger \dots a_{k_N}^\dagger\Phi_{\textup{vac}}\rangle = \frac{1}{L^{Nd/2}} \sum_{\sigma \in \mathcal{S}_N} \mathrm{sgn}(\sigma) \prod_{j=1}^N e^{ik_j\cdot x_{\sigma(j)}}. \label{eq: prod c_x a_k}
	\end{align}
\end{lemma}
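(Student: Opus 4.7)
The plan is to expand each $a_{k_j}^\dagger$ in the position basis via its definition and then reduce the resulting vacuum expectation value of a product of fermionic operators to a determinant of Kronecker deltas. Substituting $a_{k_j}^\dagger = L^{-d/2}\sum_{y_j\in\Lambda} e^{ik_j\cdot y_j} c_{y_j}^\dagger$ and distributing the sums converts the left-hand side of \eqref{eq: prod c_x a_k} into
\[
\frac{1}{L^{Nd/2}}\sum_{y_1,\dots,y_N\in\Lambda} \Bigl(\prod_{j=1}^N e^{ik_j\cdot y_j}\Bigr)\,\langle\Phi_{\textup{vac}},\, c_{x_N}\cdots c_{x_1} c_{y_1}^\dagger\cdots c_{y_N}^\dagger\, \Phi_{\textup{vac}}\rangle.
\]

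The key step is to evaluate the vacuum expectation. Using only the canonical anticommutation relations $\{c_a,c_b^\dagger\}=\delta_{ab}$ together with $c_a\Phi_{\textup{vac}}=0$, one shows by induction on $N$ (equivalently, this is Wick's theorem for fermions) that
\[
\langle\Phi_{\textup{vac}},\, c_{a_1}c_{a_2}\cdots c_{a_N}\, c_{b_N}^\dagger\cdots c_{b_1}^\dagger\, \Phi_{\textup{vac}}\rangle = \det(\delta_{a_i,b_j})_{i,j=1}^N.
\]
To match our ordering one sets $a_i=x_{N+1-i}$ and $b_i=y_{N+1-i}$; the resulting determinant is obtained from $\det(\delta_{x_i,y_j})$ by reversing both rows and columns, which leaves the determinant unchanged. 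Hence the vacuum expectation equals $\sum_{\sigma\in\mathcal{S}_N}\mathrm{sgn}(\sigma)\prod_{i=1}^N \delta_{x_i,y_{\sigma(i)}}$.

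Plugging this back in, the sums over $y_1,\dots,y_N$ collapse by the Kronecker deltas, enforcing $y_{\sigma(i)}=x_i$ and producing a factor $\prod_j e^{ik_j\cdot x_{\sigma^{-1}(j)}}$. Re-indexing the permutation sum via $\sigma\mapsto\sigma^{-1}$ (which preserves signatures) yields exactly the right-hand side of \eqref{eq: prod c_x a_k}. The hypothesis that $\tau(k)\in\tilde{\mathcal{K}}^N$ for some $\tau$ is not actually used in the algebra above — it only ensures that the vector $a_{k_1}^\dagger\cdots a_{k_N}^\dagger\Phi_{\textup{vac}}$ is (up to sign) a basis element $|\Psi_{\tau(k)}\rangle$ rather than zero, fitting the context in which this lemma will be applied. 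The main obstacle is purely notational: the careful bookkeeping of the two order reversals in Step 2 so that their signature contributions cancel; once this is checked, the rest is substitution.
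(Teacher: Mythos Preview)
Your argument is correct. Expanding each $a_{k_j}^\dagger$ in the position basis, invoking the fermionic Wick determinant $\langle\Phi_{\textup{vac}},c_{a_1}\cdots c_{a_N}c_{b_N}^\dagger\cdots c_{b_1}^\dagger\Phi_{\textup{vac}}\rangle=\det(\delta_{a_i,b_j})$, and then collapsing the $y$-sums via the Kronecker deltas does reproduce the right-hand side after the substitution $\sigma\mapsto\sigma^{-1}$. Your treatment of the two order reversals (rows and columns) is right: each contributes a factor $(-1)^{N(N-1)/2}$, so their product is $+1$. Your remark that the hypothesis $\tau(k)\in\tilde{\mathcal{K}}^N$ is not used in the algebra is also correct; for repeated $k_j$'s both sides vanish.

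The paper takes a different route: it never expands $a_k^\dagger$ in the position basis, but instead works directly with the mixed anticommutator $\{c_x,a_k^\dagger\}=L^{-d/2}e^{ik\cdot x}$ and proves \eqref{eq: prod c_x a_k} by induction on $N$, successively commuting $c_{x_1}$ through the string $a_{k_1}^\dagger\cdots a_{k_{N+1}}^\dagger$ and then recombining the resulting $N+1$ terms (indexed by the position $l$ where the anticommutator is taken) into a sum over $\mathcal{S}_{N+1}$. Your approach is shorter if Wick's theorem is taken as known and makes the Slater-determinant structure explicit from the outset; the paper's induction is entirely self-contained and sidesteps the row/column-reversal bookkeeping, at the cost of a slightly longer combinatorial recombination step at the end.
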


This formula is well known; it was stated, e.g., in \cite{ST23} in (C3) in the case that $d=1$. For convenience of the reader, we include the proof in the next subsection.

The next proposition shows that the variance of $N_\Gamma$ in an arbitrary eigenstate of $\HfF$ is small provided that $N$ is sufficiently large. 

\begin{prop}[Variance of $N_\Gamma$ in arbitrary eigenstates]\label{prop: var NGamma arb es} Let $d=1$ and let $k\in \tilde{\mathcal{K}}^N$, $\mathcal{H}_{E_k}$, $\Gamma$ and $\phi$ be as in Proposition~\ref{prop: exp NGamma es}. Then,
	\begin{align}
	\langle\phi,N_\Gamma^2\phi\rangle  - \left(\langle\phi,N_\Gamma\phi\rangle\right)^2 &\leq\; 4 N \ln N + 13 N + 3(\ln N)^2 + 13 \ln N + 10\\
    &\stackrel{N\geq 46}{\leq} \;8 N\ln N\,.
	\end{align}
\end{prop}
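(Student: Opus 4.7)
I plan to exploit the trivial-degeneracy structure already used in the preceding propositions: since $L$ is prime, the only degeneracies of $\HfF$ come from the involutions $k_j\mapsto -k_j$, so the eigenspace $\Hilbert_{E_k}$ is spanned by the Slater determinants $\Psi_{s\cdot k}$ with $s\in\{\pm 1\}^N$ (with obvious identifications when some $k_j=0$). Writing $\phi=\sum_s\alpha_s\,\Psi_{s\cdot k}$ with $\sum_s|\alpha_s|^2=1$, setting $\tilde N:=N|\Gamma|/|\Lambda|$ and $M:=N_\Gamma-\tilde N$, and using $\Var_\phi(N_\Gamma)=\langle M^2\rangle_\phi-\langle M\rangle_\phi^2$ together with $|\langle M\rangle_\phi|\le\ln N+1$ from Proposition~\ref{prop: exp NGamma es}, the task reduces to bounding $\langle M^2\rangle_\phi=\sum_{s,s'}\overline{\alpha_s}\alpha_{s'}\langle\Psi_{s\cdot k},M^2\Psi_{s'\cdot k}\rangle$.

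In momentum space $N_\Gamma=L^{-1}\sum_{p,q}F_\Gamma(q-p)\,a_p^\dagger a_q$ with $F_\Gamma(q):=\sum_{x\in\Gamma}e^{iqx}$, so $M^2$ is a sum of one-body and two-body operators. Consequently, $\langle\Psi_{s\cdot k},M^2\Psi_{s'\cdot k}\rangle$ vanishes unless $s$ and $s'$ differ in at most two sign positions, and I split the sum into Hamming-distance $0$, $1$ and $2$ contributions. For the diagonal pieces, Wick's theorem applied to the plane-wave Slater determinant yields the closed form
\begin{equation*}
\Var_{\Psi_{s\cdot k}}(N_\Gamma)=\tilde N\bigl(1-|\Gamma|/|\Lambda|\bigr)-\tfrac{1}{L^2}\sum_{j\neq j'}\bigl|F_\Gamma(s_jk_j-s_{j'}k_{j'})\bigr|^2\le N/4.
\end{equation*}
For Hamming distances $1$ and $2$, I would use the Slater--Condon rules to express each matrix element as a short sum of products of $F_\Gamma$-values at differences of the flipped momenta, and then bound each $|F_\Gamma(q)|$ by the standard estimate $\min(|\Gamma|,\pi/|q|_L)$, where $|q|_L$ denotes the distance of $q$ from $2\pi\ZZZ$.

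To sum the off-diagonal contributions I plan to apply Cauchy--Schwarz in the amplitudes $\alpha_s$ and, for each fixed $s$, sum over its $O(N^2)$ Hamming-neighbours. The resulting estimates produce harmonic-type sums such as $\sum_j 1/|k_j|_L$ and $\sum_{j\neq j'}1/|k_j\pm k_{j'}|_L$, which, owing to the $2\pi/L$ spacing of $\mathcal K$, are each $O(L\ln N)$; the $L^{-2}$ prefactor inside $M^2$ then leaves an $O(N\ln N)$ contribution from the cross terms. Adding the diagonal bound $\sum_s|\alpha_s|^2\cdot N/4 = N/4$ and the $(\ln N+1)^2$ correction from $\langle M\rangle_\phi^2$ gives the quoted inequality, with constants arranged so that the dominant term is $4N\ln N$ and the remaining polynomial-in-$\ln N$ terms absorb the book-keeping errors.

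\textbf{Main obstacle.} The technically most delicate step will be the distance-$2$ contributions: they contain fourfold products of $F_\Gamma$-factors summed over $O(N^2)$ pairs of flipped positions, so a naive pointwise bound produces a spurious second power of $N$. I intend to handle this by grouping the four factors via Cauchy--Schwarz and invoking the Parseval-type identity $\sum_{q\in\mathcal K}|F_\Gamma(q)|^2=L|\Gamma|$ before applying the $1/|q|_L$ pointwise bound, so that the double-flip contribution is forced down to the same $O(N\ln N)$ order as the single-flip one instead of $O(N^2)$.
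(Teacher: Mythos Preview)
Your plan is essentially the paper's proof: expand $\phi$ in the sign-flip basis $\{\Psi_{s\cdot k}\}$, split the bilinear form by Hamming distance $0$, $1$, $2$, compute each block of matrix elements explicitly (your Slater--Condon/Wick formulation in momentum space is the same calculation the paper does in position space via Lemma~\ref{lem: comb formula}), bound the geometric sums through $|F_\Gamma(q)|\le 2/|1-e^{iq}|$, and control the amplitude sums by Cauchy--Schwarz (the paper groups by flip position $j$ or $(l,m)$ rather than by $s$, but either grouping works).

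Two small corrections to your heuristics, neither of which breaks the argument. First, the distance-$2$ matrix elements carry products of only \emph{two} $F_\Gamma$-factors, not four (one from each copy of $N_\Gamma$), and the paper disposes of that block easily, obtaining $2(\ln N+1)^2+4N$; so this is not the delicate step. Second, the dominant $4N\ln N$ in the stated bound actually comes from the distance-$1$ block---the identity-permutation piece there carries an explicit factor $N-1$, contributing $\tfrac{2|\Gamma|(N-1)}{L}(\ln N+1)$---together with the $2\tilde N(\ln N+1)$ loss when the paper lower-bounds $\langle N_\Gamma\rangle_\phi^2$. In fact your centering $M=N_\Gamma-\tilde N$ makes precisely these two $O(N\ln N)$ contributions cancel at the level of the distance-$1$ matrix elements (the coefficient of $F_\Gamma(-2k_j)$ drops from $O(N/L)$ to $O(1/L)$), so a careful execution of your own plan would yield $\Var_\phi(N_\Gamma)=O(N)$, strictly sharper than what you are asked to prove.
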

Before proving Proposition~\ref{prop: var NGamma arb es}, we show a similar statement for the variance of $N_\Gamma$ in the eigenstates $\Psi_k$ of $\HfF$.

\begin{prop}[Variance of $N_\Gamma$ in eigenstates $\Psi_k$]\label{prop: var NGamma Psik} 
Let $d\geq 1$ and $k\in\tilde{\mathcal{K}}^N$. Then
\begin{align}
    \langle\Psi_k,N_\Gamma^2\Psi_k\rangle - \left(\langle\Psi_k,N_\Gamma\Psi_k\rangle\right)^2 \leq N \frac{|\Gamma|}{|\Lambda|}\left(1-\frac{|\Gamma|}{|\Lambda|}\right).
\end{align}
\end{prop}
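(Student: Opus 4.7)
\medskip
\noindent\textbf{Proof proposal for Proposition~\ref{prop: var NGamma Psik}.}
The plan is to compute $\langle\Psi_k,N_\Gamma^2\Psi_k\rangle$ exactly and then drop some nonnegative terms to obtain the stated upper bound on the variance. First, I would bring $N_\Gamma^2$ into normal ordered form using the canonical anticommutation relations. Since $c_y^\dagger c_x = \delta_{xy} - c_x c_y^\dagger$ and $(c_x^\dagger)^2=0$, a short calculation gives
\begin{equation}
N_\Gamma^2 = \sum_{x,y\in\Gamma} c_x^\dagger c_x c_y^\dagger c_y = N_\Gamma + \sum_{x,y\in\Gamma} c_x^\dagger c_y^\dagger c_y c_x,
\end{equation}
where the diagonal $x=y$ contributions in the double sum vanish automatically because of $(c_x^\dagger)^2=0$.

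Next, I would compute $\langle\Psi_k, c_x^\dagger c_y^\dagger c_y c_x \Psi_k\rangle$ for $x,y\in\Gamma$. The cleanest route is to observe that $\Psi_k$ is a Slater determinant built from the single-particle orbitals $a_{k_1}^\dagger,\ldots,a_{k_N}^\dagger$, so Wick's theorem (which follows in this finite-dimensional setting directly from Lemma~\ref{lem: comb formula} by expanding $c_x\Psi_k$ and $c_y\Psi_k$ in the orthonormal $(N-1)$-particle basis obtained by deleting one creation operator) yields
\begin{equation}
\langle\Psi_k, c_x^\dagger c_y^\dagger c_y c_x \Psi_k\rangle = \rho(x,x)\rho(y,y)-|\rho(x,y)|^2,
\qquad \rho(x,y):=\langle\Psi_k, c_x^\dagger c_y\Psi_k\rangle=\frac{1}{|\Lambda|}\sum_{j=1}^N e^{ik_j\cdot(y-x)}.
\end{equation}
Since $\rho(x,x)=N/|\Lambda|$, Proposition~\ref{prop: exp NGamma Psik} gives $\langle N_\Gamma\rangle = N|\Gamma|/|\Lambda|$, and summing over $x,y\in\Gamma$ produces
\begin{equation}
\langle\Psi_k,N_\Gamma^2\Psi_k\rangle = N\frac{|\Gamma|}{|\Lambda|} + \Bigl(N\frac{|\Gamma|}{|\Lambda|}\Bigr)^{\!2} - \sum_{x,y\in\Gamma}|\rho(x,y)|^2,
\end{equation}
so that
\begin{equation}\label{eq:varbasic}
\langle\Psi_k,N_\Gamma^2\Psi_k\rangle - \bigl(\langle\Psi_k,N_\Gamma\Psi_k\rangle\bigr)^2 = N\frac{|\Gamma|}{|\Lambda|} - \sum_{x,y\in\Gamma}|\rho(x,y)|^2.
\end{equation}

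Finally, I would lower bound the double sum of $|\rho(x,y)|^2$ by keeping only the $j=l$ contributions of the momentum sum. Explicitly,
\begin{equation}
\sum_{x,y\in\Gamma}|\rho(x,y)|^2 = \frac{1}{|\Lambda|^2}\sum_{j,l=1}^N \Bigl|\sum_{y\in\Gamma} e^{i(k_j-k_l)\cdot y}\Bigr|^2 \geq \frac{1}{|\Lambda|^2}\sum_{j=1}^N |\Gamma|^2 = N\frac{|\Gamma|^2}{|\Lambda|^2},
\end{equation}
and inserting this into \eqref{eq:varbasic} immediately yields the claimed bound $N(|\Gamma|/|\Lambda|)(1-|\Gamma|/|\Lambda|)$. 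The only nontrivial step is the Wick-type evaluation of the four-point function; everything else is algebraic bookkeeping and a positivity argument, so I expect no serious obstacle.
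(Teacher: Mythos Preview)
Your argument is correct and essentially coincides with the paper's proof: both arrive at the exact identity
\[
\langle\Psi_k,N_\Gamma^2\Psi_k\rangle = N\frac{|\Gamma|}{|\Lambda|} + N(N-1)\frac{|\Gamma|^2}{|\Lambda|^2} - \frac{2}{|\Lambda|^2}\sum_{p<q}\Bigl|\sum_{x\in\Gamma}e^{i(k_p-k_q)\cdot x}\Bigr|^2
\]
and then drop the nonnegative last term, which is precisely your step of keeping only the $j=l$ contributions in $\sum_{x,y}|\rho(x,y)|^2$. The only cosmetic difference is that you package the four-point function via Wick's theorem and the one-body density matrix $\rho(x,y)$, whereas the paper computes $\langle\Psi_k,c_x^\dagger c_x c_y^\dagger c_y\Psi_k\rangle$ directly from the permutation expansion of Lemma~\ref{lem: comb formula}; the resulting formulas are identical once the diagonal $j=l$ terms are separated out.
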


\begin{proof}[Proof of Theorem~\ref{thm2}]
    Let $\phi\in\SSS(\Hilbert)$ be any eigenvector of $\HfF$. The Born distribution $\PPP$ associated with $\phi$ and the observable $N_\Gamma$ has expectation $E\coloneqq \langle \phi, N_\Gamma \phi\rangle$ and variance $V\coloneqq \langle \phi, N_\Gamma^2 \phi\rangle - \langle\phi, N_\Gamma\phi\rangle^2$. Writing $\overline{B}_r(x)$ for $[x-r,x+r]$, we can express Chebyshev's inequality as
    \be
    \PPP\Bigl(\overline{B}_{\alpha\sqrt{V}}(E)\Bigr)\geq 1-\frac{1}{\alpha^2} 
    \ee
    for any $\alpha>0$. By Propositions~\ref{prop: exp NGamma es} and \ref{prop: var NGamma arb es} for $N\geq 46$,
    \be
    \overline{B}_{\alpha\sqrt{V}}(E)
    \subseteq \overline{B}_{\ln N + 1+\alpha \sqrt{8N \ln N}}\biggl(N \frac{|\Gamma|}{|\Lambda|}\biggr)
    \subseteq \overline{B}_{N\eta}\biggl(N \frac{|\Gamma|}{|\Lambda|}\biggr)
    \ee
    for $\alpha=\frac{N\eta}{2\sqrt{8N\ln N}}$ using $N\eta/2>\ln N +1$, so
    \be
    \| P_{\eq,\eta} \phi \|^2 = \PPP\Biggl(\overline{B}_{N\eta}\biggl(N \frac{|\Gamma|}{|\Lambda|}\biggr)\Biggr)\geq 1-\frac{32 N \ln N}{N^2\eta^2}\,, 
    \ee
    which is equivalent to \eqref{eqthm2}.
\end{proof}

\subsection{Proof of Lemma~\ref{lem: comb formula}}

Without loss of generality assume that $k\in\tilde{\mathcal{K}}^N$. We prove \eqref{eq: prod c_x a_k} by induction. First note that as a consequence of the canonical anticommutation relations we immediately see from the definition of the $a_k^\dagger$ that $\{c_x,a_k^\dagger \} = e^{ikx}/ L^{d/2}$ and $\{a_k^\dagger, a_{k'}^\dagger \}=0$. Now \eqref{eq: prod c_x a_k} can be shown by induction. For $N=1$ the equation holds because 
\begin{align}
	\langle \Phi_{\textup{vac}}, c_{x_1} a_{k_1}^\dagger \Phi_{\textup{vac}}\rangle = \frac{e^{ik_1\cdot
			x_1}}{ L^{d/2} } \langle\Phi_{\textup{vac}},\Phi_{\textup{vac}}\rangle - \langle\Phi_{\textup{vac}}, a_{k_1}^\dagger c_{x_1}\Phi_{\textup{vac}}\rangle = \frac{e^{ik_1 \cdot x_1}}{ L^{d/2} }.
\end{align}
Now suppose that \eqref{eq: prod c_x a_k} holds for some $N\in\mathbb{N}$. Then we have that
\begin{subequations}
	\begin{align}
        &\langle \Phi_{\textup{vac}}, c_{x_{N+1}} \dots c_{x_1} a_{k_1}^\dagger \dots a_{k_{N+1}}^\dagger\Phi_{\textup{vac}}\rangle \nonumber\\
        &= (-1)^N \langle\Phi_{\textup{vac}}, c_{x_{N+1}} \dots c_{x_1} a_{k_{N+1}}^\dagger a_{k_1}^\dagger \dots a_{k_{N}}^\dagger\Phi_{\textup{vac}}\rangle\\
        &= (-1)^N \left(\frac{e^{ik_{N+1}\cdot
				x_1}}{ L^{d/2} }\langle\Phi_{\textup{vac}},c_{x_{N+1}} \dots c_{x_2} a_{k_1}^\dagger \dots a_{k_N}^\dagger\Phi_{\textup{vac}}\rangle \right.\nonumber\\
        &\specialcell{\hfill  \left. - \langle\Phi_{\textup{vac}},c_{x_{N+1}} \dots c_{x_2} a_{k_{N+1}}^\dagger c_{x_1}a_{k_1}^\dagger \dots a_{k_N}^\dagger \Phi_{\textup{vac}}\rangle \right)}\\
		&= (-1)^N \left(\frac{e^{ik_{N+1} \cdot x_1}}{ L^{d/2} }\langle\Phi_{\textup{vac}},c_{x_{N+1}}\dots c_{x_2} a_{k_1}^\dagger \dots a_{k_N}^\dagger\Phi_{\textup{vac}}\rangle \right. \nonumber\\
        & \specialcell{\hfill - \frac{e^{ik_{N+1}\cdot x_2}}{ L^{d/2} } \langle\Phi_{\textup{vac}},c_{x_{N+1}}\dots c_{x_3}c_{x_1} a_{k_1}^\dagger \dots a_{k_N}^\dagger\Phi_{\textup{vac}}\rangle \hfill} \nonumber\\
		&\specialcell{\hfill \left. + \langle\Phi_{\textup{vac}},c_{x_{N+1}}\dots c_{x_3} a_{k_{N+1}}^\dagger c_{x_2} c_{x_1} a_{k_1}^\dagger \dots a_{k_N}^\dagger\Phi_{\textup{vac}}\rangle\right) }\\
		&=\cdots\\
		&= (-1)^N \sum_{l=1}^{N+1} (-1)^{l+1} \frac{ e^{ik_{N+1}\cdot x_l} }{ L^{d/2} } \langle\Phi_{\textup{vac}},c_{x_{N+1}}\dots c_{x_{l+1}} c_{x_{l-1}}\dots c_{x_1} a_{k_1}^\dagger \dots a_{k_N}^\dagger\Phi_{\textup{vac}}\rangle\\
		&= \frac{1}{ L^{(N+1)d/2} } \sum_{l=1}^{N+1} (-1)^{N+l+1} e^{ik_{N+1}\cdot x_l} \sum_{\sigma\in \mathcal{S}_{N,l}} \mbox{sgn}(\sigma) \prod_{j=1}^N e^{ik_j \cdot x_{\sigma(j)}},
	\end{align}
\end{subequations}
where $\mathcal{S}_{N,l}$ denotes the set of permutations $\sigma: \{1,\dots,N\} \to \{1,\dots,l-1,l+1,\dots,N+1\}$ of the set $\{1,\dots,l-1,l+1,\dots,N+1\}$. Note that we used the induction hypothesis in the last step.

Any $\sigma \in \mathcal{S}_{N,l}$ is related to a permutation $\tau \in \mathcal{S}_{N+1}$ with $\tau(N+1)=l$ via $N+1-l$ transpositions and vice versa. Therefore, we obtain
\begin{subequations}
	\begin{align}
		&\langle \Phi_{\textup{vac}}, c_{x_{N+1}} \dots c_{x_1} a_{k_1}^\dagger \dots a_{k_{N+1}}^\dagger\Phi_{\textup{vac}}\rangle \nonumber\\ 
        &= \frac{1}{ L^{(N+1)d/2} } \sum_{l=1}^{N+1} (-1)^{N+1-l} \sum_{\substack{\tau\in\mathcal{S}_{N+1}\\ \tau(N+1)=l}} (-1)^{N+l+1} \mbox{sgn}(\tau) \prod_{j=1}^{N+1} e^{ik_j\cdot x_{\tau(j)}}\\
		&= \frac{1}{ L^{(N+1)d/2} } \sum_{\tau\in\mathcal{S}_{N+1}} \mbox{sgn}(\tau) \prod_{j=1}^{N+1} e^{ik_j\cdot x_{\tau(j)}},
	\end{align}
\end{subequations}
which finishes the proof of \eqref{eq: prod c_x a_k}. 

\subsection{Proof of Proposition~\ref{prop: exp NGamma Psik}}
Let $x\in\Gamma$. With the help of Lemma~\ref{lem: comb formula} we find that
\begin{subequations}
\begin{align}
	&\langle\Psi_{k},c_x^\dagger c_x \Psi_{k'}\rangle \nonumber\\ 
    &= \langle\Phi_{\textup{vac}},a_{k_N}\dots a_{k_1} c_x^\dagger c_x a_{k_1'}^\dagger \dots a_{k_N'}^\dagger \Phi_{\textup{vac}}\rangle\\
	&= \frac{1}{L^{Nd/2}} \sum_{x_1,\dots,x_N\in\Lambda} e^{-ik_1\cdot x_1} \dots e^{-ik_N\cdot x_N} \langle\Phi_{\textup{vac}},c_{x_N} \dots c_{x_1} c_x^\dagger c_x a_{k_1'}^\dagger \dots a_{k_N'}^\dagger\Phi_{\textup{vac}}\rangle\\
	&= \frac{1}{L^{Nd/2}} \sum_{x_1,\dots,x_N\in\Lambda} e^{-ik_1\cdot x_1} \dots e^{-ik_N\cdot x_N} \chi_{\{x \in \{x_1,\dots,x_N\} \}} \langle\Phi_{\textup{vac}},c_{x_N}\dots c_{x_1} a_{k_1'}^\dagger \dots a_{k_N'}^\dagger\Phi_{\textup{vac}}\rangle\\
    &= \frac{1}{L^{Nd}} \sum_{x_1,\dots,x_N\in\Lambda} e^{-ik_1\cdot x_1}\dots e^{-ik_N\cdot x_N} \chi_{\{x \in \{x_1,\dots,x_N\} \}} \sum_{\sigma\in\mathcal{S}_N} \mbox{sgn}(\sigma) \prod_{j=1}^{N} e^{ik_j'\cdot x_{\sigma(j)}}\\
	&= \frac{1}{L^{Nd}} \sum_{l=1}^N e^{-ik_l\cdot x} \sum_{\substack{x_1,\dots,x_{l-1},x_{l+1},\dots,x_N\in\Lambda,\\ x_l=x}} \left(\prod_{\substack{m=1\\m\neq l}}^N e^{-ik_m\cdot x_m}\right) \sum_{\sigma \in \mathcal{S}_N} \mbox{sgn}(\sigma) \prod_{j=1}^N e^{ik_j'\cdot x_{\sigma(j)}}\\
	&= \frac{1}{L^{Nd}} \sum_{\sigma \in \mathcal{S}_N} \mbox{sgn}(\sigma) \sum_{l=1}^N e^{i(k_{\sigma^{-1}(l)}'-k_l)\cdot x} \prod_{\substack{m=1\\m\neq l}}^N \left(\sum_{x_m\in\Lambda} e^{i(k_{\sigma^{-1}(m)}'-k_m)\cdot x_m}\right).\label{eq: cx^t cx}
\end{align}
\end{subequations}
For $k,k' \in \mathcal{K}$ we have
\begin{subequations}
	\begin{align}
		\sum_{y \in \Lambda} e^{i(k'-k)\cdot y} &= \sum_{y_1,\dots,y_d=1}^L e^{i(k_{1}'-k_{1})y_1} \cdots e^{i(k_{d}'-k_{d})y_d}\\
		&= \prod_{m=1}^d \sum_{y_m=1}^L e^{i(k_{m}'-k_{m})y_m}\\
		&= \prod_{m=1}^{d} \left(L\delta_{k_m' k_m} + \chi_{\{k_m'\neq k_m\}} \frac{e^{i(k_m'-k_m)}-e^{i(k_m'-k_m)(L+1)}}{1-e^{i(k_m'-k_m)}}\right)\\
		&= L^d \delta_{k' k},\label{eq: sum exp higherdim},
	\end{align}
\end{subequations}
where we used that $(k_m'-k_m)L$ is a multiple of $2\pi$ and therefore $e^{i(k_m'-k_m)L}=1$.

Thus we see that if $k=k' \in \tilde{\mathcal{K}}^{N}$, only the permutation $\sigma =\mbox{id}$ gives a non-vanishing contribution in \eqref{eq: cx^t cx} and we obtain
\begin{align}
	\langle\Psi_k, c_x^\dagger c_x \Psi_k\rangle = \frac{N}{L^d}\label{eq: exp cx^t cx}
\end{align}
independently of $x\in \Gamma$. This implies
\begin{align}
	\langle\Psi_k,N_\Gamma\Psi_k\rangle = \frac{|\Gamma| N}{L^d}. \label{eq: k NGamma k}
\end{align}

If $k\neq k'$, then \eqref{eq: cx^t cx} only does not vanish if exactly one component of $k$ and $k'$ is different. Assume that $k_l$ for some $1\leq l \leq N$ does not appear in $k'$ and let $\tilde{\sigma}\in\mathcal{S}_N$ be the permutation such that $k'_{\tilde{\sigma}^{-1}(m)} = k_m$ for all $m\neq l$. Then we get
\begin{align}
	\langle\Psi_{k},c_x^\dagger c_x\Psi_{k'}\rangle = \frac{1}{L^d} \mbox{sgn}(\tilde{\sigma}) e^{i(k'_{\tilde{\sigma}^{-1}(l)}-k_l)\cdot x}\label{eq: Psik' cx^t cx Psik}
\end{align}
and therefore
\begin{align}
	\Bigl|\langle\Psi_{k},N_\Gamma \Psi_{k'}\rangle \Bigr| \leq \frac{|\Gamma|}{L^d}.\label{eq: k' NGamma k}
\end{align}
Combining \eqref{eq: k NGamma k} and \eqref{eq: k' NGamma k} and using that $|\Lambda| = L^d$ finishes the proof.

\subsection{Proof of Proposition~\ref{prop: exp NGamma es}}
Since $\HfF$ is invariant under cyclic permutations of $\Lambda$, there is no  loss of generality in assuming $\Gamma=\{1,\ldots,|\Gamma|\}$.

We first consider the case that there are no $k_l,k_m$ such that $k_l=-k_m$. In this case, we can assume without loss of generality that $k_j\geq 0$ for all $j$ as the energy is invariant under flipping the sign of any $k_j$. 
We express $\phi$ in the basis of the $\Psi_{k'}$ with $k'_j = \pm k_j$ for all $j$, i.e., we write
\begin{align}
	|\phi\rangle = \sum_{k'} \alpha_{k'} |\Psi_{k'}\rangle, \label{eq: phi}
\end{align}
where $\alpha_{k'} = \langle\Psi_{k'}|\phi\rangle$. 
We compute
\begin{subequations}
\begin{align}
	\langle \phi,N_\Gamma \phi\rangle &= \sum_{k',k''} \alpha_{k'}^* \alpha_{k''} \langle\Psi_{k'},N_\Gamma \Psi_{k''}\rangle\\
	&= \sum_{k'} |\alpha_{k'}|^2 N \frac{|\Gamma|}{|\Lambda|} + \sum_{k'\neq k''} \alpha_{k'}^* \alpha_{k''} \langle\Psi_{k'},N_\Gamma\Psi_{k''}\rangle\\
	&= N\frac{|\Gamma|}{|\Lambda|} + \sum_{x=1}^{|\Gamma|}\sum_{k'\neq k''} \alpha_{k'}^* \alpha_{k''} \langle\Psi_{k'},c_x^\dagger c_x \Psi_{k''}\rangle.
\end{align}
\end{subequations}
Because of Proposition~\ref{prop: exp NGamma Psik} we see that  $\langle\Psi_{k'},c_x^\dagger c_x\Psi_{k''}\rangle$ with $k'\neq k''$ does not vanish only if $k'$ and $k''$ differ in exactly one component. First suppose that $0<k_j<\pi$ for all $j$. For each $j$, let $k^{\prime (j)}$ denote the vector obtained from $k'$ by flipping the sign of its $j$th component. Then, by Proposition~\ref{prop: exp NGamma Psik} we find that
\begin{subequations}
\begin{align}
\sum_{x=1}^{|\Gamma|}\sum_{k'\neq k''} &\alpha_{k'}^* \alpha_{k''} \langle\Psi_{k'},c_x^\dagger c_x\Psi_{k''}\rangle \nonumber\\ 
    &= \sum_{x=1}^{|\Gamma|} \sum_{k'} \sum_{j=1}^{N} \alpha_{k'}^* \alpha_{k^{\prime (j)}} \langle\Psi_{k'},c_x^\dagger c_x\Psi_{k^{\prime (j)}}\rangle \\ 
    &= \frac{1}{L} \sum_{j=1}^{N} \sum_{x=1}^{|\Gamma|} \sum_{k'} \alpha_{k'}^* \alpha_{k^{\prime (j)}} e^{ -2i k'_{j} x }  \\ 
    &= \frac{2}{L}\,\RE\left( \sum_{j=1}^N \sum_{x=1}^{|\Gamma|} e^{-2ik_j x} \sum_{k': k_j'>0} \alpha_{k'}^* \alpha_{k^{\prime (j)}}\right)\\
	&= \frac{2}{L} \, \RE\left(\sum_{j=1}^N \frac{e^{-2ik_j}-e^{-2ik_j(|\Gamma|+1)}}{1-e^{-2ik_j}}\sum_{k': k_j'>0} \alpha_{k'}^* \alpha_{k^{\prime (j)}}\right).
\end{align}
\end{subequations}
Next note that with the Cauchy-Schwarz inequality we get
\begin{align}
	\Biggl|\sum_{k': k_j'>0} \alpha_{k'}^* \alpha_{k^{\prime (j)}}\Biggr| \leq \left(\sum_{k'} |\alpha_{k'}|^2 \sum_{k''} |\alpha_{k''}|^2\right)^{1/2} = 1.
\end{align}
Moreover, under the assumption $k_j \in (0, \pi)$, we have that $k_j = \frac{2\pi}{L}\nu_j$ for some $\nu_j \in \{1,\dots,(L-1)/2\}$ if $L$ is odd and $\nu_j\in\{1,\dots,L/2-1\}$ if $L$ is even. Then,
the inequality $|1-e^{-ix}| \geq 2|x|/\pi$ for $x\in [-\pi,\pi]$ implies
\begin{align}
	\left| 1-e^{-2ik_j}\right| = \left|1-e^{-2ik_j+2\pi i} \right| \geq 
    \frac{4}{\pi} \min\{k_j, \left| \pi - k_j\right| \} = \frac{8}{L} \min\left\{\nu_j, \left| \frac{L}{2} - \nu_j\right| \right\} .
\end{align}
We get
\begin{subequations}
\begin{align}
	\Biggl|	\sum_{x=1}^{|\Gamma|}\sum_{k'\neq k''} \alpha_{k'}^* \alpha_{k''} \langle\Psi_{k'},c_x^\dagger c_x\Psi_{k''}\rangle \Biggr| &\leq \frac{2}{L} \sum_{j=1}^N \frac{2}{|1-e^{-2ik_j}|}\\
        &\leq \frac{1}{2} \sum_{j=1}^{N} \max\left\{ \frac{1}{\nu_j}, \left| \frac{L}{2} - \nu_j\right|^{-1} \right\} \\ 
	&\leq \sum_{j=1}^{N} \frac{1}{j} \\ 
	&\leq \ln N + 1,
\end{align}
\end{subequations}
where we use the fact that $\nu_j$'s are pairwise distinct as $k \in \tilde{\mathcal{K}}^{N}$ in deriving the third inequality.

Altogether we therefore obtain
\begin{align}
	\Biggl|\langle\phi,N_\Gamma\phi\rangle  - N\frac{|\Gamma|}{|\Lambda|} \Biggr| \leq \ln N+1.
\end{align}
If $k_{j_0}=0$ or $k_{j_0} = \pi$ for one $j_0$, the computation is basically the same; the only difference is that this index does not appear in the sum over $j$ (which therefore consists only of $N-1$ terms). Moreover, if there are $k_l,k_m$ such that $k_l=-k_m$, then again this only leads to less terms in the sums over $j$. The upper bound $\ln N+1$ thus remains valid also in these cases.

\subsection{Proof of Proposition~\ref{prop: var NGamma Psik}}

We start by computing $\langle\Psi_k,N_\Gamma^2\Psi_k\rangle$. To this end we first note that
\begin{align}
	\langle\Psi_k,N_\Gamma^2\Psi_k\rangle = \sum_{x,y\in \Gamma} \langle\Psi_k,c_x^\dagger c_x c_y^\dagger c_y\Psi_k\rangle. 
\end{align}
If $x=y\in\Gamma$ then
\begin{align}
	\langle\Psi_k,c_x^\dagger c_x c_x^\dagger c_x\Psi_k\rangle = \langle\Psi_k, c_x^\dagger c_x\Psi_k\rangle = \frac{N}{L^d},
\end{align}
see \eqref{eq: exp cx^t cx}. Now suppose that $x\neq y$. Then we find with the help of Lemma~\ref{lem: comb formula} that
\begin{subequations}
	\begin{align}
		\langle&\Psi_k, \,c_x^\dagger c_xc_y^\dagger c_y\Psi_k\rangle \nonumber\\ 
        &= \langle\Phi_{\textup{vac}}, a_{k_N} \dots a_{k_1} c_x^\dagger c_x c_y^\dagger c_y a_{k_1}^\dagger \dots a_{k_N}^\dagger \Phi_{\textup{vac}}\rangle\\
		&= \frac{1}{L^{Nd/2}} \sum_{x_1,\dots,x_N\in \Lambda} e^{-ik_1\cdot x_1} \dots e^{-ik_N \cdot x_N} \langle\Phi_{\textup{vac}}, c_{x_N} \dots c_{x_1} c_x^\dagger c_x c_y^\dagger c_y a_{k_1}^\dagger \dots a_{k_N}^\dagger\Phi_{\textup{vac}}\rangle\\
		&= \frac{1}{L^{Nd/2}} \sum_{x_1,\dots,x_N \in \Lambda} e^{-ik_1\cdot x_1} \dots e^{-ik_N \cdot x_N} \chi_{\{x,y \in \{x_1,\dots,x_N\}  \}} \langle\Phi_{\textup{vac}},c_{x_N} \dots c_{x_1} a_{k_1}^\dagger \dots a_{k_N}^\dagger|\Phi_{\textup{vac}}\rangle\\
		&= \frac{1}{L^{Nd}} \sum_{x_1,\dots x_N \in \Lambda} e^{-ik_1\cdot x_1} \dots e^{-ik_N\cdot x_N} \chi_{\{x,y \in \{x_1,\dots,x_N\}\}} \sum_{\sigma \in \mathcal{S}_N} \mbox{sgn}(\sigma) \prod_{j=1}^N e^{i k_j \cdot x_{\sigma(j)}}\\
		&= \frac{1}{L^{Nd}} \sum_{\substack{l,m=1\\ l\neq m}}^N e^{-ik_l\cdot x} e^{-ik_m \cdot y} \sum_{\substack{x_1,\dots,x_N\in\Lambda\\ x_l=x, x_m=y}} \left(\prod_{\substack{n=1\\ n\neq l,m}}^N e^{-ik_n \cdot x_n}\right) \sum_{\sigma\in\mathcal{S}_N} \mbox{sgn}(\sigma) \prod_{j=1}^N e^{ik_{\sigma^{-1}(j)}\cdot x_j}\\
		&= \frac{1}{L^{Nd}} \sum_{\sigma\in\mathcal{S}_N} \mbox{sgn}(\sigma) \sum_{\substack{l,m=1\\ l\neq m}}^N e^{i(k_{\sigma^{-1}(l)}-k_l)\cdot x} e^{i(k_{\sigma^{-1}(m)}-k_m)\cdot y} \prod_{\substack{n=1\\ n\neq l,m}}^N \left(\sum_{x_n\in\Lambda} e^{i(k_{\sigma^{-1}(n)}-k_n)\cdot x_n}\right).\label{eq: Psik cx^t cx cy^t cy}
	\end{align}
\end{subequations}
Because $\sum_{y \in \Lambda} e^{i(k'-k)\cdot y}= L^{d} \delta_{k,k'} $, see \eqref{eq: sum exp higherdim}, we only get contributions from the permutations $\sigma=\mbox{id}$ and transpositions $\tau_{pq}$ with $p,q\in\{1,\dots N\}$. From $\sigma=\mbox{id}$ we get the contribution
\begin{align}
	\frac{1}{L^{Nd}} N (N-1) L^{(N-2)d} = \frac{N(N-1)}{L^{2d}}
\end{align}
and any transposition $\tau_{pq}$ contributes the term
\begin{multline}
	-\frac{1}{L^{Nd}} \left(e^{i(k_p-k_q)\cdot x} e^{i(k_q-k_p)\cdot y} + e^{i(k_q-k_p)\cdot x} e^{i(k_p-k_q)\cdot y}\right) L^{(N-2)d} \\
 = - \frac{2}{L^{2d}} \RE\left(e^{i(k_p-k_q)\cdot x} e^{i(k_q-k_p)\cdot y}\right).
\end{multline}
Therefore the overall contribution of transpositions becomes
\begin{align}
	-\frac{2}{L^{2d}} \sum_{\substack{p,q=1\\p<q}}^N \RE\left(e^{i(k_p-k_q)\cdot x} e^{i(k_q-k_p)\cdot y}\right).
\end{align}
Altogether we obtain
\begin{align}
\langle\Psi_k,c_x^\dagger c_x c_y^\dagger c_y \Psi_k\rangle = \frac{N(N-1)}{L^{2d}}-\frac{2}{L^{2d}} \sum_{\substack{p,q=1\\p<q}}^N \RE\left(e^{i(k_p-k_q)\cdot x} e^{i(k_q-k_p)\cdot y}\right).
\end{align}
Summing over $x,y\in \Gamma$ we arrive at
\begin{subequations}
\begin{align}
   \langle&\Psi_k, N_\Gamma^2\Psi_k\rangle \nonumber\\
   &= \frac{N}{L^d}|\Gamma| + \frac{N(N-1)}{L^{2d}} |\Gamma| (|\Gamma|-1) - \frac{2}{L^{2d}} \sum_{\substack{x,y\in \Gamma\\ x\neq y}} \sum_{\substack{p,q=1\\p<q}}^N \RE\left(e^{i(k_p-k_q)\cdot x} e^{i(k_q-k_p)\cdot y}\right)\\
	&= \frac{N}{L^d}|\Gamma| + \frac{N(N-1)}{L^{2d}} |\Gamma| (|\Gamma|-1) - \frac{2}{L^{2d}} \sum_{\substack{p,q=1\\p<q}}^N \left(\sum_{x,y\in \Gamma} \RE\left(e^{i(k_p-k_q)\cdot x} e^{i(k_q-k_p)\cdot y}\right)-|\Gamma| \right)\\
	&= \frac{N}{L^d}|\Gamma| + \frac{N(N-1)}{L^{2d}} |\Gamma| (|\Gamma|-1) + \frac{|\Gamma|N(N-1)}{L^{2d}} - \frac{2}{L^{2d}} \sum_{\substack{p,q=1\\p<q}}^N \left|\sum_{x\in\Gamma} e^{i(k_p-k_q)\cdot x} \right|^2\label{ineq: NGamma^2 same ev 1} \\
	&\leq N\frac{|\Gamma |}{|\Lambda|}  + N(N-1) \frac{|\Gamma|^2}{|\Lambda|^2}.\label{ineq: NGamma^2 same ev}
\end{align}
\end{subequations}
With this and $\langle\Psi_k,N_\Gamma\Psi_k\rangle= N |\Gamma|/|\Lambda|$ we finally get for the variance of $N_\Gamma$ in an eigenstate $\Psi_k$ that
\begin{align}
	\langle\Psi_k,N_\Gamma^2\Psi_k\rangle - \left(\langle\Psi_k,N_\Gamma\Psi_k\rangle\right)^2 \leq N \frac{|\Gamma|}{|\Lambda|}\left(1-\frac{|\Gamma|}{|\Lambda|}\right).
\end{align}

\subsection{Proof of Proposition~\ref{prop: var NGamma arb es}}

As in the proof of Proposition~\ref{prop: exp NGamma es} we first assume that there are no $k_l,k_m$ such that $k_l=-k_m$. In this case, we can without loss of generality assume that $k_j\geq 0$ for all $j$.
As we have already computed $\langle\phi,N_\Gamma\phi\rangle$ in Proposition~\ref{prop: exp NGamma es}, it only remains to compute $\langle\phi,N_\Gamma^2\phi\rangle$. To this end, we express $\phi$ again in the basis of the $\Psi_{k'}$ with $k'_j = \pm k_j$, see \eqref{eq: phi}. Then we get
\begin{subequations}
\begin{align}
	\langle\phi,N_\Gamma^2\phi\rangle &= \sum_{k',k''} \alpha_{k'}^* \alpha_{k''} \langle\Psi_{k'},N_\Gamma^2\Psi_{k''}\rangle\\
	&= \sum_{k'} |\alpha_{k'}|^2 \langle\Psi_{k'},N_\Gamma^2\Psi_{k'}\rangle + \sum_{k'\neq k''} \alpha_{k'}^* \alpha_{k''} \langle\Psi_{k'},N_\Gamma^2\Psi_{k''}\rangle.
\end{align}
\end{subequations}
For the first sum we obtain with the help of \eqref{ineq: NGamma^2 same ev} that
\begin{align}
	\sum_{k'} |\alpha_{k'}|^2 \langle\Psi_{k'},N_\Gamma^2\Psi_{k'}\rangle \leq N\frac{|\Gamma|}{|\Lambda|} + N(N-1) \frac{|\Gamma|^2}{|\Lambda|^2}.
\end{align}
For the second sum first note that
\begin{subequations}
\begin{align}
	\sum_{k'\neq k''} &\alpha_{k'}^* \alpha_{k''} \langle\Psi_{k'},N_\Gamma^2\Psi_{k''}\rangle \nonumber\\
    &\qquad = \sum_{x,y\in \Gamma} \sum_{k'\neq k''} \alpha_{k'}^* \alpha_{k''} \langle\Psi_{k'},c_x^\dagger c_x c_y^\dagger c_y\Psi_{k''}\rangle\\
	&\qquad = \sum_{x\in \Gamma} \sum_{k'\neq k''}\alpha_{k'}^*\alpha_{k''} \langle\Psi_{k'},c_x^\dagger c_x\Psi_{k''}\rangle + \sum_{\substack{x,y\in \Gamma\\ x\neq y}} \sum_{k'\neq k''}\alpha_{k'}^* \alpha_{k''} \langle\Psi_{k'},c_x^\dagger c_x c_y^\dagger c_y\Psi_{k''}\rangle.\label{eq: splitting thm 10}
\end{align}
\end{subequations}
The first sum can be estimated as in the proof of Proposition~\ref{prop: exp NGamma es}, i.e.,
\begin{align}
	\Biggl| \sum_{x\in \Gamma} \sum_{k'\neq k''}\alpha_{k'}^*\alpha_{k''} \langle\Psi_{k'},c_x^\dagger c_x\Psi_{k''}\rangle \Biggr| \leq \ln N+1.
\end{align}
For the second sum in \eqref{eq: splitting thm 10} we start by noting that similarly to \eqref{eq: Psik cx^t cx cy^t cy} we have for $x \neq y$ and $k'\neq k''$ that
\begin{multline}\label{eq: cccc different k}
	\langle\Psi_{k'},c_x^\dagger c_x c_y^\dagger c_y\Psi_{k''}\rangle \\
     = \frac{1}{L^N} \sum_{\sigma\in\mathcal{S}_N} \mbox{sgn}(\sigma) \sum_{\substack{l,m=1\\ l\neq m}}^N e^{i(k''_{\sigma^{-1}(l)}-k_l')x} e^{i(k''_{\sigma^{-1}(m)}-k_m')y} \prod_{\substack{n=1\\ n\neq l,m}}^N \left(\sum_{x_n\in\Lambda} e^{i(k''_{\sigma^{-1}(n)}-k'_n)x_n} \right).
\end{multline} 
From this formula we see that if $k'\neq k''$ then only terms where one or two entries of $k'$ and $k''$ are different give non-vanishing contributions.

We shall separate the second sum in \eqref{eq: splitting thm 10} into two sums where $k'$ and $k''$ differ in one or two components, respectively.
Suppose first that $k'$ and $k''$ differ in only one component.
Let us evaluate $\langle\Psi_{k'},c_x^\dagger c_x c_y^\dagger c_y\Psi_{k''}\rangle$ in the case that $k'_1 =k_1>0$ and $k''_1 = -k_1$. If $\sigma = \mbox{id}$, we get the contribution
\begin{align}\label{eq: contribution id}
	\frac{N-1}{L^2}\left(e^{-2ik_1 x} + e^{-2ik_1 y}\right).
\end{align}
The only other permutations that yield non-vanishing terms are transpositions of the form $\tau_{1p}$ with $p>1$. Altogether, these transpositions give the contribution
\begin{align}
-\frac{1}{L^2}\sum_{p=2}^N \left( e^{i(-k_1-k_p')x} e^{i(k_p'-k_1)y} + e^{i(k_p'-k_1)x} e^{i(-k_1-k_p')y} \right).
\end{align}
One obtains analogous expressions in the case that $k'$ and $k''$ do not differ in the first, but in another component.

Our goal is to estimate
\begin{equation}
\sum_{\substack{x,y\in \Gamma\\ x\neq y}} \sum_{\substack{k', k''  \text{differ}\\ \text{ in 1 component}}}\alpha_{k'}^* \alpha_{k''} \langle\Psi_{k'},c_x^\dagger c_x c_y^\dagger c_y\Psi_{k''}\rangle
\end{equation}
First suppose that $0<k_j<\pi$ for all $j$. 
To facilitate the computation, we first let the sum over $x,y\in\Gamma, x\neq y$  run over all $x,y\in\Gamma$ and later estimate the terms where $x=y$.
Considering $k_j'>0$ and $k_j'<0$ separately we can write 
\begin{multline}
\sum_{\substack{x,y\in \Gamma}} \sum_{\substack{k', k''  \text{differ}\\ \text{ in 1 component}}}\alpha_{k'}^* \alpha_{k''} \langle\Psi_{k'},c_x^\dagger c_x c_y^\dagger c_y\Psi_{k''}\rangle\\
=\sum_{\substack{x,y\in \Gamma}} \sum_{j=1}^N \sum_{k':k_j'>0}
\alpha_{k'}^* \alpha_{k^{\prime (j)}} 2\RE\left(\langle\Psi_{k'},c_x^\dagger c_x c_y^\dagger c_y\Psi_{k^{\prime (j)}}\rangle\right),
\end{multline}
where $k^{\prime (j)}$ denotes the vector obtained from $k'$ by flipping the sign of its $j$th component.
Using the expressions for $\langle\Psi_{k'},c_x^\dagger c_x c_y^\dagger c_y\Psi_{k''}\rangle $ derived above, this equals
\begin{align}
	&\frac{2}{L^2} \RE\Biggl(\sum_{x,y=1}^{|\Gamma|}\sum_{j=1}^N\bigg[ (N-1)\left(e^{-2ik_jx} + e^{-2ik_jy}\right)\sum_{\substack{k':k'_j>0}} 
    \alpha_{k'}^* \alpha_{k^{\prime (j)}}\label{eq: one comp diff}\\
	&-\sum_{\substack{p=1\\p\neq j}}^N \big(e^{i(-k_j-k_p)x} e^{i(k_p-k_j)y} + e^{i(k_p-k_j)x} e^{i(-k_j-k_p)y}\big)\sum_{\substack{k':k'_j,k'_p>0}} 
    \alpha_{k'}^* \alpha_{k^{\prime (j)}} \nonumber \\
	&\left.-\sum_{\substack{p=1\\p\neq j}}^N \bigl(e^{i(-k_j+k_p)x} e^{i(-k_p-k_j)y} + e^{i(-k_p-k_j)x} e^{i(-k_j+k_p)y}\bigr) \sum_{k':k'_j>0,k'_p<0} 
    \alpha_{k'}^* \alpha_{k^{\prime (j)}}\bigg]\Biggr) 
    \right. \nonumber 
\end{align}
Carrying out the summations over $x$ and $y$ we arrive at
\begin{align}
	& \frac{2}{L^2} \RE\Biggl(\sum_{j=1}^N\biggl[ 2|\Gamma|(N-1) \frac{e^{-2ik_j}-e^{-2ik_j(|\Gamma|+1)}}{1-e^{-2ik_j}} \sum_{\substack{k':k'_j>0}} 
    \alpha_{k'}^* \alpha_{k^{\prime (j)}}\nonumber\\
	&\quad - 2 \sum_{\substack{p=1\\p\neq j}}^N \biggl(\frac{e^{-i(k_j+k_p)} - e^{-i(k_j+k_p)(|\Gamma|+1)}}{1-e^{-i(k_j+k_p)}} \frac{e^{i(k_p-k_j)}-e^{i(k_p-k_j)(|\Gamma|+1)}}{1-e^{i(k_p-k_j)}}  \sum_{k':k'_j,k_p'>0} \alpha_{k'}^* \alpha_{k^{\prime (j)}} \biggr) \nonumber \\
	&\quad- 2 \sum_{\substack{p=1\\p\neq j}}^N \biggl(\frac{e^{-i(k_j-k_p)} - e^{-i(k_j-k_p)(|\Gamma|+1)}}{1-e^{-i(k_j-k_p)}} \frac{e^{-i(k_p+k_j)}-e^{-i(k_p+k_j)(|\Gamma|+1)}}{1-e^{-i(k_p+k_j)}} \sum_{k':k'_j>0,k'_p<0} \alpha_{k'}^* \alpha_{k^{\prime (j)}}\biggr)\biggr]\Biggr) 
    \label{eq: one comp diff 2}
\end{align}
Taking the absolute value, the sums over $k'$ can again be upper bounded by 1. As in the proof of Proposition~\ref{prop: exp NGamma es}, we compute that 
\begin{multline}
	\frac{4}{L^2}\sum_{j=1}^N \Biggl| |\Gamma|(N-1)\left(\frac{e^{-2ik_j}-e^{-2ik_j(|\Gamma|+1)}}{1-e^{-2ik_j}}\right) \Biggr| \leq \frac{4|\Gamma|(N-1)}{L^2} \sum_{j=1}^N \frac{2}{(4j/L)}\\
	= \frac{2|\Gamma|(N-1)}{L}\sum_{j=1}^N \frac{1}{j}
	\leq \frac{2|\Gamma|(N-1)}{L} (\ln N+1).
\end{multline}
If $k_{j_0}= 0$ or $k_{j_0}=\pi$ for some $j_0$ or if there are $k_l,k_m$ such that $k_l=-k_m$, the corresponding term is missing in the sum over $j$ and the upper bound of $N-1$ remains valid.

Next we estimate
\begin{subequations}
\begin{align}
	\frac{4}{L^2} &\sum_{j=1}^N\Biggl|\sum_{\substack{p=1\\p\neq j}}^N \frac{e^{-i(k_j+k_p)} - e^{-i(k_j+k_p)(|\Gamma|+1)}}{1-e^{-i(k_p+k_j)}} \frac{e^{i(k_p-k_j)}-e^{i(k_p-k_j)(|\Gamma|+1)}}{1-e^{i(k_p-k_j)}} \Biggr|\\
	&\leq \frac{16}{L^2} \sum_{j=1}^N\sum_{\substack{p=1\\p\neq j}}^N \frac{1}{|1-e^{-i(k_j+k_p)}| |1-e^{i(k_p-k_j)}|}\\
 &\leq \frac{16}{L^2} \sum_{j=1}^N \left(\sum_{\substack{p=1\\p\neq j}}^N \frac{1}{|1-e^{-i(k_p+k_j)}|^2}\right)^{1/2} \left(\sum_{\substack{p=1\\p\neq j}}^N \frac{1}{|1-e^{i(k_p-k_j)}|^2}\right)^{1/2}\\
 &\leq \frac{16}{L^2} \sum_{j=1}^N \left(\sum_{p=1}^N \frac{L^2}{16p^2}\right)\\
 &\leq 2N,
\end{align}
\end{subequations}
where we used in the last step that $\sum_{n=1}^\infty n^{-2}=\pi^2/6<2$. Note that the last term in \eqref{eq: one comp diff 2} is of a similar form and can be estimated along the same lines.

To get an estimate for the overall contribution of $k'\neq k''$ which differ in one component, we still have to estimate the terms in \eqref{eq: one comp diff} with $x=y$. By a similar computation as before and in the proof of Proposition~\ref{prop: exp NGamma es} we find that an upper bound for the absolute value of these terms is given by
\begin{align}
    \frac{12(N-1)}{L^2} \sum_{j=1}^N \Biggl| \frac{e^{-2ik_j}-e^{-2ik_j(|\Gamma|+1)}}{1-e^{-2ik_j}} \Biggr| \leq \frac{6(N-1)}{L}(\ln N+1).
\end{align}
Again, if there is a $k_{j_0}$ that is equal to 0 or $\pi$ or if there are $k_l,k_m$ such that $k_l=-k_m$, the only modification that has to be made is to exclude the terms in the sum over $j$ and therefore the estimates remain valid. 

Altogether we find for the contribution of the terms in \eqref{eq: splitting thm 10} with $k'\neq k''$, where $k'$ and $k''$ differ in exactly one component, the upper bound
\begin{align}
	\frac{2|\Gamma|(N-1)}{L}(\ln N+1)+4N + \frac{6(N-1)}{L}\left(\ln N+1\right).
\end{align}

Next we turn to the terms in \eqref{eq: splitting thm 10} with $k'\neq k''$ which differ in two components. As before, we can assume that $0<k_j<\pi$ for all $j$ as by the same reasoning as in the previous computations, the upper bounds remain valid if this condition is relaxed.

Consider \eqref{eq: cccc different k} in the case $k'_1=k_1>0, k_2'=k_2>0$ and $k_1''=-k_1, k''_2=-k_2$. Then the permutation $\sigma=\mbox{id}$ gives the contribution
\begin{align}
	\frac{1}{L^2}\left(e^{-2ik_1x} e^{-2ik_2y} + e^{-2ik_2x} e^{-2ik_1y}\right)
\end{align}
and from the permutation $\tau_{12}$ we get
\begin{align}
	-\frac{2}{L^2} e^{-i(k_1+k_2)x} e^{-i(k_1+k_2)y}.
\end{align}
If $k'_1=k_1, k_2'=-k_2, k''_1=-k_1, k''_2=k_2$, the permutation $\sigma =\mbox{id}$ gives
\begin{align}
	\frac{1}{L^2} \left(e^{-2ik_1x} e^{2ik_2y}+e^{2ik_2x}e^{-2ik_1y}\right)
\end{align}
and $\tau_{12}$ yields the term
\begin{align}
	-\frac{2}{L^2} e^{i(k_2-k_1)x} e^{i(k_2-k_1)y}.
\end{align}
Summing again first over all $x,y\in\Gamma$ and later estimating the terms with $x=y$ which were added to facilitate the computation we get
\begin{align}
	\frac{2}{L^2}&\RE\Biggl(\sum_{x,y=1}^{|\Gamma|}\sum_{\substack{l,m=1\\l\neq m}}^N\Bigg[\left(e^{-2ik_lx} e^{-2ik_my} + e^{-2ik_mx}e^{-2ik_ly}- 2 e^{-i(k_l+k_m)x} e^{-i(k_l+k_m)y}\right) \sum_{k':k'_l,k'_m>0} \alpha_{k'}^* \alpha_{k^{\prime (lm)}}\nonumber\\
    &+ \left(e^{-2ik_lx} e^{2ik_my} + e^{2ik_mx} e^{-2ik_l y} - 2 e^{i(k_m-k_l)x} e^{i(k_m-k_l)y}\right) \sum_{k':k'_l>0,k'_m<0} \alpha_{k'}^* \alpha_{k^{\prime (lm)}} \Bigg]\Biggr),
\end{align}
where $k^{\prime (lm)}$ denotes the vector obtained from $k'$ by flipping the signs of its $l$th and $m$th components.
Using similar estimates as previously, we can bound the absolute value of this term by
\begin{align}
    2(\ln N+1)^2 + 4N.
\end{align}
Next we note that the terms with $x=y$ in this sum vanish. 

Putting everything together, we finally arrive at
\begin{subequations}
\begin{align}
	\langle\phi,N_\Gamma^2\phi\rangle &\leq N\frac{|\Gamma|}{|\Lambda|} + N(N-1) \frac{|\Gamma|^2}{|\Lambda|^2} + \ln N +1 + \frac{2|\Gamma|(N-1)}{L}(\ln N+1) + 4N \nonumber\\
	&\quad + \frac{6(N-1)}{L}(\ln N+1)  + 2 (\ln N+1)^2 + 4N\\
	&\leq N^2 \frac{|\Gamma|^2}{|\Lambda|^2} + 2 N \ln N+ 11N + 2(\ln N)^2+ 11 \ln N +9.
\end{align}
\end{subequations}
Because of $(a^2-b^2)= (a-b)(a+b)$ for $a,b\in\mathbb{R}$ it follows from Proposition~\ref{prop: exp NGamma es} that
\begin{align}
	\Biggl|\langle\phi,N_\Gamma\phi\rangle^2 - N^2 \frac{|\Gamma|^2}{|\Lambda|^2} \Biggr| \leq  \left(\ln N+1\right)\left(\ln N +1 + 2N\frac{|\Gamma|}{|\Lambda|}\right)
\end{align}
and therefore 
\begin{align}
	\langle\phi,N_\Gamma\phi\rangle^2 \geq N^2 \frac{|\Gamma|^2}{|\Lambda|^2}- \left(\ln N+1\right)\left(\ln N +1 + 2N\frac{|\Gamma|}{|\Lambda|}\right).
\end{align}
With this we finally obtain
\begin{align}
	\langle\phi,N_\Gamma^2\phi\rangle - \langle\phi,N_\Gamma\phi\rangle^2 \leq 4 N \ln N + 13 N + 3(\ln N)^2 + 13 \ln N + 10\,.
\end{align}
For $N\geq 46$ it holds that $13 N + 3(\ln N)^2 + 13 \ln N + 10<4N\ln N$.
This finishes the proof.

\subsection{Proof of Proposition~\ref{prop: ETH higherdim}}

	The proof of a similar result for $d=1$ given by Tasaki in \cite{T24} can be adapted to our situation with only very small modifications. Note that the proof in the present situation in the case that $d=1$ was already given by Tasaki in an earlier arXiv version of his paper, however, the model was slightly changed in later versions.
 
	Similarly as in the proof of Lemma~4 in \cite{T24} we first show that
	\begin{align}
		\langle\Psi_k, e^{\lambda N_{\Gamma}} \Psi_k\rangle \leq \left(\mu e^\lambda + (1-\mu)\right)^N 
	\end{align}
for any $\lambda \in (0,1]$. To this end note that
\begin{align}
	e^{\lambda N_{\Gamma}/2} |\Psi_k\rangle = b_{k_1}^\dagger \dots b_{k_N}^\dagger |\Phi_{\textup{vac}}\rangle
\end{align}
with
\begin{align}
	b_k^\dagger \coloneqq  \frac{1}{L^{d/2}} \left(e^{\lambda/2} \sum_{x\in\Gamma} e^{ik\cdot x} c_x^\dagger + \sum_{x\in\Lambda\backslash \Gamma}e^{ik\cdot x} c_x^\dagger\right).
\end{align}
We then obtain that
\begin{align}
	\langle \Psi_k, e^{\lambda N_\Gamma} \Psi_k\rangle = \langle\Phi_{\textup{vac}},  b_{k_N} \dots b_{k_1} b_{k_1}^\dagger \dots b_{k_N}^\dagger \Phi_{\textup{vac}}\rangle &\leq \prod_{j=1}^N \|b_{k_j} b_{k_j}^\dagger\|.
\end{align}
For any $j$, the operator $b_{k_j} b_{k_j}^\dagger$ is self-adjoint and 
\begin{multline}
	\left\{b_{k_j}, b_{k_j}^\dagger\right\} = \frac{1}{L^d}\left(e^{\lambda} \sum_{x,y\in\Gamma} e^{ik_j\cdot (x-y)} \{c_y, c_x^\dagger\} + e^{\lambda/2} \sum_{\substack{x\in \Gamma\\ y\in \Lambda\backslash \Gamma}} e^{ik_j\cdot (x-y)} \{c_y, c_x^\dagger\}\right.\\
	\left.+ e^{\lambda/2} \sum_{\substack{x \in \Lambda\backslash\Gamma\\ y\in \Gamma}} e^{ik_j\cdot (x-y)} \{c_y,c_x^\dagger\} + \sum_{x,y\in \Lambda\backslash\Gamma}e^{ik_j\cdot(x-y)} \{c_y,c_x^\dagger\}\right)\\
	= \frac{1}{L^d} \left(e^\lambda |\Gamma| + |\Lambda|-|\Gamma|\right)
	= \mu e^{\lambda} + (1-\mu).
\end{multline}
Next note that this implies
\begin{align}
	(b_{k_j}b_{k_j}^\dagger)^2 = \left(\mu e^\lambda + (1-\mu)\right) b_{k_j} b_{k_j}^\dagger - b_{k_j} b_{k_j} b_{k_j}^\dagger b_{k_j}^\dagger = (\mu e^\lambda + (1-\mu)) b_{k_j} b_{k_j}^\dagger.\label{eq: bkj square}
\end{align}
The last step follows from $b_{k_j} b_{k_j}=0$, which can be seen as follows: From the definition of the $b_{k_j}$ we immediately obtain
\begin{multline}
	b_{k_j} b_{k_j} = \frac{1}{L^d}\left(e^{\lambda} \sum_{x,y\in\Gamma} e^{-ik_j\cdot (x+y)} c_x c_y + e^{\lambda/2} \sum_{\substack{x\in \Gamma\\ y\in \Lambda\backslash \Gamma}} e^{-ik_j\cdot (x+y)} c_x c_y\right. \\
	\left.+ e^{\lambda/2} \sum_{\substack{x \in \Lambda\backslash\Gamma\\ y\in \Gamma}} e^{-ik_j\cdot (x+y)} c_x c_y + \sum_{x,y\in \Lambda\backslash\Gamma}e^{-ik_j\cdot(x+y)} c_x c_y\right).
\end{multline}
If $x=y$, then $c_x c_y=0$. In the first and fourth sum, for every term $c_x c_y$ with $x\neq y$ also the term $c_yc_x$ occurs and because of $c_x c_y=-c_y c_x$ and the same prefactors, the terms cancel. Therefore, the first and fourth sum vanishes. By a similar argumentation, the second sum is equal to the $(-1)$ times the third sum, i.e., they cancel, and altogether we obtain that $b_{k_j} b_{k_j}=0$.

It follows from \eqref{eq: bkj square} that the eigenvalues of the self-adjoint operator $b_{k_j} b_{k_j}^\dagger$ are 0 and $\mu e^{\lambda} + (1-\mu)$: If $\alpha \in \mathbb{R}$ is an eigenvalue of $b_{k_j}b_{k_j}^\dagger$ with eigenfunction $\phi$, then $b_{k_j} b_{k_j}^\dagger |\phi\rangle = \alpha |\phi\rangle$ and therefore $(b_{k_j} b_{k_j}^\dagger)^2|\phi\rangle = \alpha b_{k_j} b_{k_j}^\dagger|\phi\rangle$, but at the same time $(b_{k_j} b_{k_j}^\dagger)^2|\phi\rangle = (\mu e^\lambda+(1-\mu)) b_{k_j} b_{k_j}^\dagger|\phi\rangle$, i.e., either $\alpha = \mu e^\lambda + (1-\mu)$ or $b_{k_j}b_{k_j}^\dagger|\phi\rangle = 0$ which implies $\alpha=0$ as $\phi\neq 0$. Therefore we conclude that $\|b_{k_j} b_{k_j}^\dagger\| = \mu e^\lambda + (1-\mu)$ and hence
\begin{align}\label{eq: exp-ng}
	\langle\Psi_k, e^{\lambda N_\Gamma} \Psi_k\rangle \leq \left(\mu e^\lambda + (1-\mu)\right)^N.
\end{align}
The rest of the proof of \eqref{eq: ETH higherdim} is exactly the same as in the first arXiv version of \cite{T24}, but we include it here for the convenience of the reader. 
First note that the projection onto the non-equilibrium space can be bounded in terms of characteristic functions as
\be \label{eq: Pneq-characteristic-fn}
\bigl\| P_{\noneq,\eta}\Psi_k\bigr\|^2 \leq \langle \Psi_k, \mathbf{1}_{[N (\mu+\eta),N]}(N_\Gamma) \Psi_k\rangle +  \langle \Psi_k, \mathbf{1}_{[0,N (\mu-\eta)]}(N_\Gamma) \Psi_k\rangle.
\ee
In the first term, the characteristic function is bounded above by $e^{\lambda (N_\Gamma/N-\mu-\eta)N}$  for all $\lambda\geq 0$.
Using inequality \eqref{eq: exp-ng} we obtain
\be
\langle \Psi_k, \mathbf{1}_{[N (\mu+\eta),N]}(N_\Gamma) \Psi_k\rangle \leq \left[(\mu e^{\lambda(1-\mu)}+(1-\mu)e^{-\lambda \mu})e^{-\lambda \eta}\right]^N
\ee
Using the power series representation for the exponential function one finds that for all $0<\mu<1, |\lambda|<1$
\be
\mu e^{\lambda(1-\mu)}+(1-\mu)e^{-\lambda \mu} \leq 1+\frac{\mu(1-\mu)}{2}\lambda^2+\frac{\mu(1-\mu)}{2}\lambda^2 \sum_{l=3}^\infty \frac{1}{l!} \leq 1+\frac{3}{4}\mu(1-\mu)\lambda^2\leq e^{\frac{3}{4}\mu(1-\mu)\lambda^2}.
\ee
Choosing $\lambda=2\eta/(3 \mu(1-\mu))$, which is smaller than $1$ by assumption, results in the bound
\be\label{eq: ng-bound}
\langle \Psi_k, \mathbf{1}_{[N (\mu+\eta),N]}(N_\Gamma) \Psi_k\rangle 
\leq e^{-\frac{\eta^2}{3\mu(1-\mu)}N}.
\ee
For the second term in \eqref{eq: Pneq-characteristic-fn}, note that it equals $\langle \Psi_k, \mathbf{1}_{[N(1-\mu+\eta),N]}(N_{\Lambda\setminus\Gamma}) \Psi_k\rangle$.
Now we can apply the bound \eqref{eq: ng-bound} with $\Gamma$ replaced by $\Lambda\setminus \Gamma$ and $\mu$ by $(1-\mu)$ and conclude that
\be
\bigl\| P_{\noneq,\eta}\Psi_k\bigr\|^2 < 2 e^{-\frac{\eta^2}{3\mu(1-\mu)}N}.
\ee

The last statement of Proposition~\ref{prop: ETH higherdim}, that $D_E\geq 2^{Nd}$ if $N<L/2d$, can be verified as follows. Choose $Nd$ distinct positive integers less than $L/2$, multiply them by $2\pi/L$, and call them, in any order, $k_{ia}$ with $i=1,\ldots,N$ and $a=1,\ldots,d$. Write $k_i=(k_{i1},\ldots, k_{id})$ and $\tilde{k}_i=k_{\pi(i)}$ for the permuted version such that $\tilde{k}=(\tilde{k}_1,\ldots,\tilde{k}_N)\in \tilde{\mathcal{K}}^N$. Thus, $\Psi_{\tilde k}$ as in \eqref{eq: energy eigenstate highdim} 
is an eigenvector of $\HfF$, and the corresponding eigenvalue is $-2\sum_{i=1}^N \sum_{a=1}^d \cos k_{ia}$. 
Now consider the $2^{Nd}$ $Nd$-vectors $k'$ with $k_{ia}'=\pm k_{ia}$; none of them is a permutation of any other, so suitable permutations yield $2^{Nd}$ distinct elements $\tilde{k}'$ of $\tilde{\mathcal{K}}^N$. The corresponding $\Psi_{\tilde{k}'}$ have the same eigenvalue, so the eigenspace must at least have dimension $2^{Nd}$.

\section{Conclusions}
\label{sec:conclusions}

If a Hamiltonian $H$ satisfies the ETH in the form that every eigenvector is in MATE (which we show in Theorem~\ref{thm2} is the case for the free Fermi gas on a 1d lattice), then every initial pure state $\psi_0$ will thermalize in the sense of MATE. 
For a highly degenerate Hamiltonian $H_0$ for which at least \emph{one} eigenbasis lies in MATE, in general not all eigenbases need to lie in MATE, and hence not all initial states thermalize.
Nevertheless, as Proposition~\ref{prop: most eigenbases H0} shows, \textit{most} eigenbases of such a $H_0$ do lie in MATE. 
This leads to Theorem~\ref{thm1 new}, which shows that adding an arbitrarily small generic perturbation to $H_0$ will with high probability yield a non-degenerate Hamiltonian $H$ whose eigenbasis is indeed in MATE, and thus $H$ exhibits thermalization for all initial states.
As a concrete example, these general considerations apply to the free Fermi gas on a $d$-dimensional lattice with $d>1$.

\bigskip
\bigskip

\noindent{\bf Acknowledgments.} We thank Hal Tasaki and Peter Reimann for very valuable feedback on the first version of this paper, Herbert Spohn for additional references,  and Hannah Markwig and Thomas Markwig for help with Footnote~\ref{fn:alg}. 
This work was supported by the Deutsche Forschungsgemeinschaft (DFG, German Research Foundation) -- TRR 352 -- Project-ID 470903074.
C.V.\ acknowledges financial support by the German Academic Scholarship Foundation. Moreover, the work of C.V.\ has been partially supported by the ERC Starting Grant ``FermiMath", grant agreement nr. 101040991, funded by the European Union. Views and opinions expressed are however those of the authors only and do not necessarily reflect those of the European Union or the European Research Council Executive Agency. Neither the European Union nor the granting authority can be held responsible for them.\\
\\
\noindent{\bf Data Availability Statement.} No data were analyzed in this paper.\\
\\
\noindent{\bf Conflict of Interest Statement.} The authors have no conflicts of interest.

\bibliographystyle{plainurl}
\bibliography{literature.bib}

\begin{thebibliography}{10}

\bibitem{Beck10}
J.~Beck.
\newblock Deterministic approach to the kinetic theory of gases.
\newblock {\em Journal of Statistical Physics}, 138:160--269, 2010.

\bibitem{Beck17}
J.~Beck.
\newblock Dimension-free uniformity with applications, {I}.
\newblock {\em Mathematika}, 63(3):734--761, 2017.

\bibitem{dBP17}
S.~De~Bievre and P.E. Parris.
\newblock A rigourous demonstration of the validity of {B}oltzmann's scenario
  for the spatial homogenization of a freely expanding gas and the
  equilibration of the {K}ac ring.
\newblock {\em Journal of Statistical Physics}, 168:772--793, 2017.
\newblock URL: \url{http://arxiv.org/abs/1701.00116}.

\bibitem{Deutsch91}
J.~M. Deutsch.
\newblock {Quantum statistical mechanics in a closed system}.
\newblock {\em Physical Review A}, 43:2046--2049, 1991.

\bibitem{essler2016quench}
F.H.L. Essler and M.~Fagotti.
\newblock Quench dynamics and relaxation in isolated integrable quantum spin
  chains.
\newblock {\em Journal of Statistical Mechanics: Theory and Experiment},
  2016(6):064002, 2016.

\bibitem{Fis94}
G.~Fischer.
\newblock {\em Ebene algebraische Kurven}.
\newblock Vieweg, 1994.

\bibitem{GogEis16}
C.~Gogolin and J.~Eisert.
\newblock Equilibration, thermalisation and the emergence of statistical
  mechanics in closed quantum systems.
\newblock {\em Reports on Progress in Physics}, 79:056001, 2016.
\newblock URL: \url{http://arxiv.org/abs/1503.07538}.

\bibitem{GHLT15}
S.~Goldstein, D.~A. Huse, J.~L. Lebowitz, and R.~Tumulka.
\newblock Thermal equilibrium of a macroscopic quantum system in a pure state.
\newblock {\em Physical Review Letters}, 115:100402, 2015.
\newblock URL: \url{http://arxiv.org/abs/1506.07494}.

\bibitem{GHLT16}
S.~Goldstein, D.~A. Huse, J.~L. Lebowitz, and R.~Tumulka.
\newblock Macroscopic and microscopic thermal equilibrium.
\newblock {\em Annalen der Physik}, 529:1600301, 2017.
\newblock URL: \url{http://arxiv.org/abs/1610.02312}.

\bibitem{GLMTZ10}
S.~Goldstein, J.~L. Lebowitz, C.~Mastrodonato, R.~Tumulka, and N.~Zangh\`\i.
\newblock {Approach to Thermal Equilibrium of Macroscopic Quantum Systems}.
\newblock {\em Physical Review E}, 81:011109, 2010.
\newblock URL: \url{http://arxiv.org/abs/0911.1724}.

\bibitem{GLTZ19}
S.~Goldstein, J.~L. Lebowitz, R.~Tumulka, and N.~Zangh\`\i.
\newblock Gibbs and {B}oltzmann entropy in classical and quantum mechanics.
\newblock In V.~Allori, editor, {\em Statistical Mechanics and Scientific
  Explanation: Determinism, Indeterminism and Laws of Nature}, pages 519--581.
  World Scientific, 2020.
\newblock URL: \url{http://arxiv.org/abs/1903.11870}.

\bibitem{haag1973asymptotic}
R.~Haag, R.V. Kadison, and D.~Kastler.
\newblock Asymptotic orbits in a free {F}ermi gas.
\newblock {\em Communications in Mathematical Physics}, 33:1--22, 1973.

\bibitem{jakvsic2024approach}
V.~Jak{\v{s}}i{\'c}, C.-A. Pillet, and C.~Tauber.
\newblock Approach to equilibrium in translation-invariant quantum systems:
  some structural results.
\newblock {\em Annales Henri Poincar{\'e}}, 25(1):715--749, 2024.
\newblock URL: \url{http://arxiv.org/abs/2204.00440}.

\bibitem{keating2015spectra}
J.P. Keating, N.~Linden, and H.J. Wells.
\newblock Spectra and eigenstates of spin chain {H}amiltonians.
\newblock {\em Communications in Mathematical Physics}, 338:81--102, 2015.
\newblock URL: \url{http://arxiv.org/abs/1403.1121}.

\bibitem{Gem15}
A.~Khodja, R.~Steinigeweg, and J.~Gemmer.
\newblock Relevance of the eigenstate thermalization hypothesis for thermal
  relaxation.
\newblock {\em Physical Review E}, 91:012120, 2015.
\newblock URL: \url{http://arxiv.org/abs/1408.0187}.

\bibitem{lanford1972approach}
O.E. Lanford and D.W. Robinson.
\newblock Approach to equilibrium of free quantum systems.
\newblock {\em Communications in Mathematical Physics}, 24:193--210, 1972.

\bibitem{LPSW09}
N.~Linden, S.~Popescu, A.J. Short, and A.~Winter.
\newblock Quantum mechanical evolution towards thermal equilibrium.
\newblock {\em Physical Review E}, 79:061103, 2009.
\newblock URL: \url{http://arxiv.org/abs/0812.2385}.

\bibitem{Ueda}
T.~Mori, T.N. Ikeda, E.~Kaminishi, and M.~Ueda.
\newblock Thermalization and prethermalization in isolated quantum systems: a
  theoretical overview.
\newblock {\em Journal of Physics B}, 51:112001, 2018.
\newblock URL: \url{http://arxiv.org/abs/1712.08790}.

\bibitem{Parlett02}
B.~N. Parlett.
\newblock The (matrix) discriminant as a determinant.
\newblock {\em Linear Algebra and its Applications}, 355:85--101, 2002.

\bibitem{Reimann08}
P.~Reimann.
\newblock {Foundations of Statistical Mechanics under Experimentally Realistic
  Conditions}.
\newblock {\em Physical Review Letters}, 101:190403, 2008.
\newblock URL: \url{http://arxiv.org/abs/0810.3092}.

\bibitem{Reimann2015}
P.~Reimann.
\newblock Generalization of von {N}eumann's approach to thermalization.
\newblock {\em Physical Review Letters}, 115:010403, 2015.
\newblock URL: \url{http://arxiv.org/abs/1507.00262}.

\bibitem{RDO08}
M.~Rigol, V.~Dunjko, and M.~Olshanii.
\newblock Thermalization and its mechanism for generic isolated quantum
  systems.
\newblock {\em Nature}, 452:854--858, 2008.
\newblock URL: \url{http://arxiv.org/abs/0708.1324}.

\bibitem{ruggiero2020quantum}
P.~Ruggiero, P.~Calabrese, B.~Doyon, and J.~Dubail.
\newblock Quantum generalized hydrodynamics.
\newblock {\em Physical Review Letters}, 124(14):140603, 2020.

\bibitem{ST23}
N.~Shiraishi and H.~Tasaki.
\newblock Nature abhors a vacuum: A simple rigorous example of thermalization
  in an isolated macroscopic quantum system.
\newblock {\em Journal of Statistical Physics}, 191(7):82, 2024.
\newblock URL: \url{http://arxiv.org/abs/2310.18880}.

\bibitem{Srednicki94}
M.~Srednicki.
\newblock {Chaos and quantum thermalization}.
\newblock {\em Physical Review E}, 50:888--901, 1994.
\newblock URL: \url{http://arxiv.org/abs/cond-mat/9403051}.

\bibitem{SWGW23}
P.~Strasberg, A.~Winter, J.~Gemmer, and J.~Wang.
\newblock Classicality, markovianity, and local detailed balance from pure
  state dynamics.
\newblock {\em Physical Review A}, 108:012225, 2023.
\newblock URL: \url{http://arxiv.org/abs/2209.07977}.

\bibitem{sugimoto2023bounds}
S.~Sugimoto, R.~Hamazaki, and M.~Ueda.
\newblock {Bounds on eigenstate thermalization}, 2023.
\newblock Preprint.
\newblock URL: \url{http://arxiv.org/abs/2303.10069}.

\bibitem{sukhov1983convergence}
Y.~M. Sukhov.
\newblock Convergence to equilibrium for a free {F}ermi gas.
\newblock {\em Teoreticheskaya i Matematicheskaya Fizika}, 55(2):282--290,
  1983.

\bibitem{tas16}
H.~Tasaki.
\newblock Typicality of thermal equilibrium and thermalization in isolated
  macroscopic quantum systems.
\newblock {\em Journal of Statistical Physics}, 163:937--997, 2016.
\newblock URL: \url{http://arxiv.org/abs/1507.06479}.

\bibitem{T24b}
H.~Tasaki.
\newblock {Heat flows from hot to cold: A simple rigorous example of
  thermalization in an isolated macroscopic quantum system}, 2024.
\newblock Preprint.
\newblock URL: \url{http://arxiv.org/abs/2404.04533}.

\bibitem{T24}
H.~Tasaki.
\newblock {Macroscopic Irreversibility in Quantum Systems: ETH and
  Equilibration in a Free Fermion Chain}, 2024.
\newblock Preprint.
\newblock URL: \url{http://arxiv.org/abs/2401.15263}.

\bibitem{T24c}
H.~Tasaki.
\newblock {Macroscopic thermalization by unitary time-evolution in the weakly
  perturbed two-dimensional Ising model --- An application of the
  Roos-Teufel-Tumulka-Vogel theorem}, 2024.
\newblock Preprint.
\newblock URL: \url{http://arxiv.org/abs/2409.09395}.

\bibitem{TTV22-mathe}
S.~Teufel, R.~Tumulka, and C.~Vogel.
\newblock {Typical Macroscopic Long-Time Behavior for Random Hamiltonians}.
\newblock {\em Annales Henri Poincar\'e}, 2024.
\newblock URL: \url{http://arxiv.org/abs/2303.13242}.

\bibitem{vonNeumann29}
J.~von Neumann.
\newblock {Beweis des Ergodensatzes und des $H$-Theorems in der neuen
  Mechanik}.
\newblock {\em Zeitschrift f\"ur Physik}, 57:30--70, 1929.
\newblock English translation: \textit{European Physical Journal H}, 35:
  201--237, 2010.
\newblock URL: \url{http://arxiv.org/abs/1003.2133}.

\end{thebibliography}

\end{document}